\documentclass[conference]{IEEEtran}
%
\usepackage{cite}
\usepackage{xcolor}

\usepackage{graphicx,amssymb,amstext,amsmath}




%
%


\newcounter{actr}
{\begin{list}{(\alph{actr})}{\usecounter{actr}}}{\end{list}}

\newcounter{ictr}
{\begin{list}{(\roman{ictr})}{\usecounter{ictr}}}{\end{list}}

\newtheorem{remark}{Remark}
\newtheorem{thm}{Theorem}

\newtheorem{prop}{Proposition}
\newtheorem{defn}{Definition}
\newtheorem{fact}{Fact}
\newenvironment{new-proof}[1]
{{\em Proof }:\\}%
{ \noindent\qed }
%






\hyphenation{or-tho-nor-mal}
\hyphenation{wave-let wave-lets}






\newcommand{\mrm}{\mathrm}





\newcommand{\cE}{{\mathcal{E}}}


\newcommand{\bG}{{\mathbf{G}}}

\newcommand{\bH}{{\mathbf{H}}}


\newcommand{\bp}{{\mathbf{p}}}

\newcommand{\bs}{{\mathbf{s}}}

\newcommand{\cS}{{\mathcal{S}}}

\newcommand{\bu}{{\mathbf{u}}}

\newcommand{\bv}{{\mathbf{v}}}

\newcommand{\cW}{{\mathcal{W}}}

\newcommand{\bx}{{\mathbf{x}}}

\newcommand{\cX}{{\mathcal{X}}}


\newcommand{\al}{\alpha}

\newcommand{\eps}{\varepsilon}




\DeclareMathAlphabet{\mathbsf}{OT1}{cmss}{bx}{n}
\DeclareMathAlphabet{\mathssf}{OT1}{cmss}{m}{sl}

\DeclareSymbolFont{bsfletters}{OT1}{cmss}{bx}{n}
\DeclareSymbolFont{ssfletters}{OT1}{cmss}{m}{n}
\DeclareMathSymbol{\bsfGamma}{0}{bsfletters}{'000}
\DeclareMathSymbol{\ssfGamma}{0}{ssfletters}{'000}
\DeclareMathSymbol{\bsfDelta}{0}{bsfletters}{'001}
\DeclareMathSymbol{\ssfDelta}{0}{ssfletters}{'001}
\DeclareMathSymbol{\bsfTheta}{0}{bsfletters}{'002}
\DeclareMathSymbol{\ssfTheta}{0}{ssfletters}{'002}
\DeclareMathSymbol{\bsfLambda}{0}{bsfletters}{'003}
\DeclareMathSymbol{\ssfLambda}{0}{ssfletters}{'003}
\DeclareMathSymbol{\bsfXi}{0}{bsfletters}{'004}
\DeclareMathSymbol{\ssfXi}{0}{ssfletters}{'004}
\DeclareMathSymbol{\bsfPi}{0}{bsfletters}{'005}
\DeclareMathSymbol{\ssfPi}{0}{ssfletters}{'005}
\DeclareMathSymbol{\bsfSigma}{0}{bsfletters}{'006}
\DeclareMathSymbol{\ssfSigma}{0}{ssfletters}{'006}
\DeclareMathSymbol{\bsfUpsilon}{0}{bsfletters}{'007}
\DeclareMathSymbol{\ssfUpsilon}{0}{ssfletters}{'007}
\DeclareMathSymbol{\bsfPhi}{0}{bsfletters}{'010}
\DeclareMathSymbol{\ssfPhi}{0}{ssfletters}{'010}
\DeclareMathSymbol{\bsfPsi}{0}{bsfletters}{'011}
\DeclareMathSymbol{\ssfPsi}{0}{ssfletters}{'011}
\DeclareMathSymbol{\bsfOmega}{0}{bsfletters}{'012}
\DeclareMathSymbol{\ssfOmega}{0}{ssfletters}{'012}
























\newcommand{\rvs}{{\mathssf{s}}}    


\newcommand{\rvu}{{\mathssf{u}}}    

\newcommand{\rvv}{{\mathssf{v}}}    


\newcommand{\rvx}{{\mathssf{x}}}    


\newcommand{\rvy}{{\mathssf{y}}}    

\ifCLASSINFOpdf
\else
\fi
\hyphenation{op-tical net-works semi-conduc-tor}


\newcommand{\bracedn}[4]{\draw[decorate, decoration={brace, amplitude=5pt},thick] ([xshift=-0.5mm,yshift=#3]#2.south east)--([xshift=0.5mm,yshift=#3]#1.south west) node[midway,anchor=north,outer sep=2mm] {#4}}
\newcommand{\dimup}[4]{\draw[<->,thick] ([xshift=0.5mm,yshift=#3]#1.north west)--([xshift=-0.5mm,yshift=#3]#2.north east) node[midway,anchor=south] {#4}}
\newcommand{\dimdn}[4]{\draw[<->,thick] ([xshift=-0.5mm,yshift=#3]#2.south east)--([xshift=0.5mm,yshift=#3]#1.south west) node[midway,anchor=north] {#4}}
\definecolor{light-gray}{gray}{0.75}

\usepackage{tikz}
\usetikzlibrary{shapes,arrows,shadows,positioning,decorations.pathreplacing,trees,calc}
\tikzstyle{sym} = [draw, thick, rectangle, font=\small, minimum width=4mm, minimum height=4mm, text centered]
\tikzstyle{esym} = [sym, fill=light-gray]
\tikzstyle{usym} = [sym, fill=white]
\tikzstyle{diagbox} = [draw, rectangle, font=\footnotesize, fill=white, text centered, rounded corners]
\tikzstyle{codebox} = [draw, rectangle, font=\footnotesize, minimum height=7mm, fill=white, text centered]

\begin{document}
%
\title{Streaming Codes for Channels with Burst and Isolated Erasures}

\author{\IEEEauthorblockN{Ahmed Badr and Ashish Khisti}
\IEEEauthorblockA{ Electrical and Computer Engineering\\
University of Toronto\\
Toronto, ON, M5S 3G4\\
Email: \{abadr, akhisti\}@comm.utoronto.ca}
\and
\IEEEauthorblockN{Wai-Tian Tan and John Apostolopoulos}
\IEEEauthorblockA{Mobile and Immersive Experience Lab\\
Hewlett Packard Laboratories\\
1501 Pagemill Road\\
Palo Alto, CA, 94304}
}


%


\maketitle

\begin{abstract}
We study low-delay error correction codes for streaming recovery over a class of packet-erasure channels that introduce both burst-erasures and isolated erasures. We propose a simple, yet effective class of codes whose parameters can be tuned to obtain a tradeoff between the capability to correct burst and isolated erasures. Our construction generalizes previously proposed low-delay codes which are effective only against burst erasures.  

We establish an information theoretic upper bound on the capability of any code to simultaneously correct burst and isolated erasures and show that our proposed constructions meet the upper bound in some special cases. We discuss the operational significance of column-distance and column-span metrics and establish that the rate $1/2$ codes discovered by Martinian and Sundberg [IT Trans.\, 2004] through a computer search indeed attain the optimal column-distance and column-span tradeoff.

Numerical simulations over a Gilbert-Elliott channel model and a Fritchman model show significant performance gains  over previously proposed low-delay codes and random linear codes for certain range of channel parameters.
\end{abstract}


%
\IEEEpeerreviewmaketitle

\section{Introduction}
Emerging applications such as interactive video conferencing, voice over IP and cloud computing
are required to achieve an  end-to-end latency of less than 200 ms. The
round-trip time in traditional networks can alone approach this limit. Hence it is necessary 
to develop new delay-optimized networking protocols and delay-sensitive
coding techniques in order to meet such stringent delay constraints. 
In this paper we focus on  low-delay error correction codes 
for streaming data at the application layer. 
Commonly used error correction codes operate on message blocks. To apply them to streaming data,
we need to either buffer data packets at the encoder or accumulate all packets at the decoder before
any recovery is possible. To reduce delay we need to keep the codeword lengths short, which in turn reduces the error correction capability.

The fundamental limits of delay-constrained communication are very different from the classical
Shannon capacity.  It is well known for example that the Shannon capacity of an erasure channel only
depends on the fraction of the packets lost over the channel. However when delay constraints
are imposed, the pattern of packet losses becomes significant. As a toy example, consider two different
communication channels as shown in Fig.~\ref{fig:chan-ex} with different loss patterns. The first channel introduces up-to two erasures
in any sliding window of length four. The second channel can erase up-to four packets in a burst, 
but any burst must be followed by a guard interval of at-least four non-erased packets.  Clearly both channel models
have a loss rate of $50\%$. However the decoding deadlines that
can be realized over these channels can be very different. For the first channel, we can use a short $(4,2)$
erasure-correction code and recover each source packet with a deadline of $\tau =4$ time units. For the second channel
we  need to use a $(8,4)$ erasure correction code  and this yields a deadline of $\tau=8$ time units.

\begin{figure}
\centering
\includegraphics[width=\linewidth]{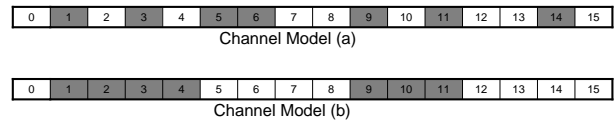}
\caption{Two packet erasure channels with a different loss structure. The first channel has no more than two 
erasures in a sliding window of length four whereas the second channel can have up-to four erasures in a single burst
followed by a guard spacing of at-least four non-erased packets. The shaded packets are erased symbols. A similar example also appears in~\cite{martinianThesis}.}
\label{fig:chan-ex}
\end{figure}

Surprisingly it turns out that the decoding delay on the second channel can  be reduced to ${\tau=5}$ by using a rate $1/2$  delay-optimal code for the burst-erasure channel  proposed in~\cite{MartinianT07, MartinianS04,martinianThesis}. Unlike traditional codes, these constructions recognize the different recovery deadlines of streaming data, and do not wait to recover all the erased packets simultaneously. Instead they exploit the burst-structure of the channel to enable selective recovery of earlier data. In particular, following the erasure burst between $t \in [1,4]$ the code recovers only the data packet $s[1]$ at time $t=5$, the data packet $s[2]$ at time $t=6$ etc.  Such low-delay constructions exist for any burst-erasure channel with a maximum burst-length and a given delay. We will refer to these constructions as streaming codes (SCo) in this paper and the associated feature of recovering successive source packets in a sequential manner as streaming recovery. 

One weakness of the SCo codes~\cite{MartinianT07, MartinianS04,martinianThesis} is that their performance is sensitive to isolated packet losses. As  reported in our simulations  over a Gilbert-Eliott channel model, the error-correction capability of the code deteriorates significantly when we introduce just a small loss probability in the good state.  Motivated by this observation,  we study low-delay error correction codes for a class of channels  that introduce both burst erasures and isolated erasures. Fig.~\ref{fig:chan-mix} provides an example of such a channel. In any sliding window of a given length $W$, the channel can introduce either a certain number of erasures in arbitrary locations or an erasure burst of a certain maximum length. As we observe in simulations, low-delay codes for such channels also perform well over Gilbert Eliott channels and other related channels.

One simple construction for such channels is based on concatenation of two different codes. 
We  generate one set of parity checks from a standard erasure code and another set from the SCo code and then concatenate the two parity checks in the transmitted packet.
The former parity checks can be used when the window of interest has isolated erasures whereas the latter parity checks can be used when it has burst-erasures. Unfortunately such an approach can introduce a significant overhead and is not desirable.

From a code design viewpoint, codes with large {column distance} can correct large number of isolated erasures, whereas codes with large {column span} can correct large bursts. Thus we seek codes with large {column distance} $(d_T)$ and {column span} $(c_T)$ for channels with both burst and isolated losses.
Naturally there exists a tradeoff between these parameters. We establish, to our knowledge, the first information theoretic
outer bound on the achievable  $(d_T,c_T)$ for any code of a given rate. This bound enables us to verify that some of
the code constructions reported using a computer search in~\cite{MartinianS04} are indeed optimal. 

Our proposed construction divides each source packet $s[i]$ into two groups of sub packets say $s_A[i]$ and $s_B[i]$. 
It generates separate parity checks $p_A[\cdot]$ and $p_B[\cdot]$ for each group and  combines the parity checks $p_A[t] + p_B[t-\Delta]$  
after a suitable time-shift of $\Delta$.  By increasing the shift $\Delta$ we tradeoff the column distance for a larger column span. 
Our  construction is optimal for $R=1/2$. Codes with either a maximum value of $d_T$ or $c_T$  also appear as special cases in this construction. 


One practical appeal of our constructions is the ability to perform trade-off between correcting burst and isolated losses using a simple mechanism. This means the same encoder and decoder can work with different channels with different mix of burst and isolated losses by simply adjusting the shift $\Delta$. Furthermore, such trade-off can be adjusted mid-session if the application identifies a change in prevalent network conditions. Since only a single parameter is involved, it contains negligible overhead to send $\Delta$ in each packet so that trade-offs can be made without explicit signalling that could add delay.

We point the reader to~\cite{de-sco,mu-sco,liKG:11, LuiMASc, tekin,leong,streaming-1, streaming-9,streaming-2, streaming-5,streaming-6,streaming-7,streaming-10,streaming-11} for additional works on error control mechanisms for streaming. 

\section{System Model}
We study low-delay error correction codes for a particular channel model  with the following property.
Take any sliding window of length $W$. The channel can introduce either a single erasure burst of length $B$ or a maximum of $N$ erasures in
arbitrary locations, but no other erasure pattern. We will generally assume that ${N < B}$ since the set of arbitrary erasures includes the burst-erasure
pattern as a special case.  An example of such a channel with $W=5$, $B=3$ and $N=2$ is provided in Fig.~\ref{fig:chan-mix}.

\begin{figure}
\centering
\includegraphics[width=\linewidth]{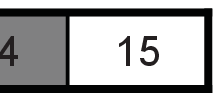}
\caption{A channel model with a mixture of burst-erasures and isolated erasures. In any sliding window of length $W=5$ there is either a single erasure burst of length $B=3$ or
up-to $N=2$ erasures.}
\label{fig:chan-mix}
\vspace{-1em}
\end{figure}

We assume a deterministic source arrival process. At time $i \ge 0,$ the encoder is revealed a source packet $\rvs[i]$ which we assume is a symbol from a source alphabet $\cS$.
At time $i$ the encoder generates a channel symbol $\rvx[i]$ which belongs to a channel input alphabet $\cX$.  The channel symbol is a causal function of the source symbols, i.e.
\begin{align}
\rvx[i] = f_i(\rvs[0], \ldots, \rvs[i]), \quad i \ge 0. \label{eq:enc-func}
\end{align}

The channel output is given by either $\rvy[i] =\rvx[i],$ when the packet is not erased and by $\rvy[i] = \star,$ when the packet is erased. Given the channel output, the decoder is required
to reconstruct each packet with a delay of $T$ units i.e.,\footnote{Notice that the total number of channel packets involving $\rvs[i]$ before its recovery is $T+1$. }
\begin{align}
\rvs[i] = g_i(\rvy[0], \ldots, \rvy[i+T]).\label{eq:dec-func}
\end{align}


\begin{remark}
In contrast to $(n,k)$ block code, where $k$ information symbols are mapped to $n$ codeword symbols, the proposed setup maps a stream of incoming source packets over
an alphabet $\cS$ to a stream of channel packets over the alphabet $\cX$. To add redundancy we require that $|\cX| \ge |\cS|$. 
\end{remark}


A rate $R = \frac{|\cS|}{|\cX|}$ is achievable if there exists a feasible code that recovers every erased symbol $s[i]$ by time ${i+T}$ from any permissible channel 
i.e., the channel introduces no more than $N$ arbitrary erasures or a single erasure-burst of length up-to $B$ in any sliding window of length $W$.

For the rest of the paper, we set $W=T+1$ as the analysis is most convenient for this choice. The interplay between delay and the channel-dynamics also appears most interesting in this regime. For $T \gg W$ the delay constraint is not particularly active, while for $T \ll W$ the guard separation between packet losses can be generally large. 

\section{Distance and Span Metrics}
\label{sec:dist-col}

Let ${\mathbb F}_q$ denote a finite-field of size $q$. For convenience we let $\cS = {\mathbb{F}}_q^k$ and $\cX = \mathbb{F}_q^n$. We view the input symbols $\rvs[i] \equiv \bs_i$ as a length $k$ vector  over ${\mathbb F}_q$ and $\rvx[i] \equiv \bx_i$ as a length $n$ vector over ${\mathbb F}_q$. 
We  restrict our attention to time-invariant linear $(n,k,m)$ convolutional codes specified by
$\bx_i = \sum_{j=0}^m \bs_{i-j} \bG_{j}$
where $\bG_0,\ldots, \bG_m$ are  generator matrices over ${\mathbb F}_q^{k \times n}$. 

The first $T$ output symbols can be expressed as,
\begin{align}
\vspace{-1em}
\label{eq:trunc-cc}
[\bx_0, \bx_1, \ldots, \bx_T] = [\bs_0, \bs_1, \ldots, \bs_T] \cdot \bG^s_T.
\end{align}
where 
\begin{equation}\bG^s_T = \begin{bmatrix}\bG_0 & \bG_1 & \ldots & \bG_T \\  0 & \bG_0 &  & \bG_{T-1} \\ \vdots & &\ddots & \vdots \\ 0 & & \ldots & \bG_0 \end{bmatrix}\label{eq:GsT}\end{equation}
is the truncated generator matrix to the first ${T+1}$ columns. Note that $\bG_j =0$ if $j > m$. For the low-delay property the
 minimum distance and span properties of $\bG_T^s$ are important as discussed below. Such a connection was  
 discussed in~\cite{MartinianS04} and used to perform a computer search of good low-delay codes. 

\begin{defn}[Column Distance]
The column distance of $\bG_T^s$ is defined as
$$d_T  = \min_{\substack{\bs \equiv [\bs_0, \bs_1, \ldots, \bs_T]\\ \bs_0 \neq 0}} \mathrm{wt}(\bs \cdot \bG^s_T)$$
where $\mrm{wt}(\bx)$ equals to the Hamming weight of the vector $\bx$. 
\end{defn}

We refer the reader to~\cite[Chapter 3]{zigangirov} for some properties of $d_T$. 
\begin{fact}
A  convolutional code with a column distance of $d_T$ can recover every information symbol with a delay of $T$ provided the channel introduces no more than
$N =d_T-1$ erasures in any sliding window of length ${T+1}$.
Conversely there exists at-least one erasure pattern with $d_T$ erasures in a window of length ${T+1}$ where the decoder fails to recover all source packets.
\label{fact:dT}
\end{fact}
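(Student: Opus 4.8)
The plan is to prove the two halves of the Fact separately: the positive (achievability) part by a sequential, inductive decoding argument over the source index $i$, and the converse part by exhibiting a pair of source streams that a suitable erasure pattern makes indistinguishable to the decoder.

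For the positive part, I would show by strong induction on $i$ that, whenever every sliding window of length $T+1$ carries at most $N=d_T-1$ erasures, the decoder can output $\bs_i$ correctly from $\rvy[0],\ldots,\rvy[i+T]$. Assume $\bs_0,\ldots,\bs_{i-1}$ have already been recovered on time (the base case $i=0$ has nothing to subtract). Since the code is linear and time-invariant, the decoder may cancel the contributions of $\bs_0,\ldots,\bs_{i-1}$ from the packets $\bx_i,\ldots,\bx_{i+T}$; a routine index count shows that what remains on these $T+1$ slots is exactly $[\bs_i,\ldots,\bs_{i+T}]\cdot\bG^s_T$, i.e.\ a length-$(T+1)$ truncated codeword of the same code observed through the channel, and within the window $[i,i+T]$ at most $d_T-1$ of its slots are erased. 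Now if two candidates $[\bs_i,\ldots,\bs_{i+T}]$ and $[\bs_i',\ldots,\bs_{i+T}']$ with $\bs_i\neq\bs_i'$ both agreed on every unerased slot, their difference would be a source block with nonzero leading entry whose image under $\bG^s_T$ is supported only on the $\le d_T-1$ erased positions, hence of weight $\le d_T-1<d_T$ — impossible by the definition of $d_T$. Thus $\bs_i$ is uniquely, hence correctly, determined, and the induction closes.

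For the converse, I would take a minimizer $\bs^\star=[\bs_0^\star,\ldots,\bs_T^\star]$ with $\bs_0^\star\neq 0$ and $\mathrm{wt}(\bs^\star\cdot\bG^s_T)=d_T$, set $\bc^\star=\bs^\star\cdot\bG^s_T$, and let the channel erase exactly the packet slots in $[0,T]$ on which $\bc^\star$ is nonzero (padding with extra erasures inside $[0,T]$ to reach $d_T$ if needed, for which the window has room). Feeding the encoder either the stream $(\bs_0^\star,\ldots,\bs_T^\star,0,0,\ldots)$ or the all-zero stream produces, over times $0$ through $T$, channel packets that agree on every unerased slot — in the first case those slots carry the zero blocks of $\bc^\star$, in the second case they carry zero — so $\rvy[0],\ldots,\rvy[T]$ is identical for the two streams. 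Since the deadline for $\bs_0$ is $T$, the decoder output $g_0(\rvy[0],\ldots,\rvy[T])$ is the same in both cases, yet the true $\bs_0$ is $\bs_0^\star\neq 0$ in one and $0$ in the other; hence this pattern of $d_T$ erasures, all inside one window of length $T+1$, defeats every decoder.

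The genuinely delicate point, and the step I would be most careful about, is the inductive structure of the positive part: streaming recovery forces earlier source packets to be resolved before later ones, so the erasure budget $d_T-1$ must be read per sliding window of length $W=T+1$ (not globally), and one must verify that cancelling the already-decoded packets really does reduce the task to a single truncated block codeword carrying at most $d_T-1$ erasures. A minor but worth-stating subtlety is granularity: a packet erasure deletes a whole $\mathbb{F}_q^n$ block, so $\mathrm{wt}(\cdot)$ in the statement is to be read at the block level (equivalently, as the number of nonzero packets of $\bs\cdot\bG^s_T$), which is the convention under which the thresholds $d_T-1$ and $d_T$ in the statement are tight.
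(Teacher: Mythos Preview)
Your proposal is correct and follows essentially the same route as the paper's (commented-out) proof: for achievability, linearity plus the definition of $d_T$ forces any two truncated codewords with differing leading source block to differ in at least $d_T$ packet slots, so with $\le d_T-1$ erasures in $[i,i+T]$ the leading block is uniquely determined, after which one cancels and slides the window; for the converse, one erases the $d_T$ nonzero packet positions of a minimum-weight codeword with $\bs_0\neq 0$ to make it indistinguishable from the all-zero stream. Your write-up is somewhat more explicit about the induction and about the block-level granularity of $\mathrm{wt}(\cdot)$, but the argument is the same.
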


To the best of our knowledge the column span of a convolutional code was first introduced in~\cite{MartinianS04}
in the context of low-delay codes for burst erasure channels. 

\begin{defn}[Column Span]
The column span of $\bG^s_T$ is defined as

$$c_T  = \min_{\substack{\bs \equiv [\bs_0, \bs_1, \ldots, \bs_T]\\ \bs_0 \neq 0}} \mathrm{span}(\bs \cdot \bG^s_T)$$
where $\mrm{span}(\bx)$ computes the length of the support of the vector $\bx$ i.e., 
$\mrm{span}(\bx) = j-i+1,$
where $j$ is the last index where $\bx$ is non-zero and $i$ is the first such index.
\end{defn}

\begin{fact}
A necessary and sufficient condition for a  convolutional code  to recover every erased symbol with a delay of $T$ from a channel
that introduces no more than a single erasure burst of maximum length $B$ in any sliding window of length ${T+1}$ is that $c_T > B$.
\label{fact:cT}
\end{fact}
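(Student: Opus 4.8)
The plan is to prove both directions of the equivalence with the same tool: exhibiting, or ruling out, a pair of source streams that the decoder cannot distinguish from the channel output it sees. I would handle the necessity of $c_T>B$ by contraposition. Suppose $c_T\le B$ and let $\bu=[\bu_0,\ldots,\bu_T]$ with $\bu_0\ne 0$ be a span minimizer, so $\by:=\bu\cdot\bG^s_T$ satisfies $\mathrm{span}(\by)\le B$; let $[i,j]\subseteq[0,T]$ be its support, of length at most $B$ (if $\by$ is identically zero, take $[i,j]$ to be any interval of length $B$ and read $c_T=0$). Erasing exactly $\rvx[i],\ldots,\rvx[j]$ is a single burst of length at most $B$, hence a permissible channel. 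By \eqref{eq:trunc-cc}, the all-zero source stream and the stream equal to $\bu$ on $[0,T]$ (and zero afterwards) produce, over $[0,T]$, the transmitted blocks $0$ and $\by$; since $\by$ vanishes outside the erased window, the decoder sees the identical $(\rvy[0],\ldots,\rvy[T])$ in both cases even though $\bs_0$ differs. Hence $\bs_0$ cannot be recovered by time $T$, so $c_T>B$ is necessary.

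For sufficiency I would assume $c_T>B$, fix any permissible channel realization, and argue in two steps. First, a structural observation that uses $W=T+1$: any two distinct erasure bursts are separated by at least $T$ delivered packets, because a window of length $T+1$ may not contain parts of two bursts with a gap between them. Second, a successive-cancellation induction: recover the erased source symbols burst by burst, and within a burst in increasing time order, at each stage treating all earlier-ordered erased symbols as already reconstructed by their deadlines (and non-erased symbols as known to the decoder). For a burst occupying $[j,j+b-1]$ with $b\le B$, to recover $\bs_{j+\ell}$ at its deadline $j+\ell+T$ I would look at the decoding window $[j+\ell,j+\ell+T]$: by the guard-interval observation together with $\ell<b\le B$, the preceding burst has finished before $j+\ell$ and the following burst has not yet started by $j+\ell+T$, so the only erasures in this window are $\rvx[j+\ell],\ldots,\rvx[j+b-1]$. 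After subtracting the contributions of all source symbols with time index below $j+\ell$ (all known), if $\bs_{j+\ell}$ were not determined there would be two consistent source streams whose difference $\bu$ has $\bu_t=0$ for $t<j+\ell$ and $\bu_{j+\ell}\ne 0$; its induced channel block over $[j+\ell,j+\ell+T]$ equals $[\bu_{j+\ell},\ldots,\bu_{j+\ell+T}]\cdot\bG^s_T$ by time invariance and must vanish at every delivered position, hence is supported inside $[j+\ell,j+b-1]$, an interval of length $b-\ell\le B$. Its span is then at most $B$ with a nonzero leading block, contradicting $c_T>B$; thus $\bs_{j+\ell}$ is recovered on time and the induction closes.

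I expect the main obstacle to be the bookkeeping of the sufficiency direction rather than any single hard idea. One has to set up the decoding order and induction hypothesis so that ``subtract the known past'' is genuinely justified, and one has to pin down precisely that the decoding window sees only the tail of the current burst --- the place where the $W=T+1$ guard-interval bound and the inequality $\ell<b\le B$ are both essential. By contrast the indistinguishability argument itself, and the entire necessity direction, are routine; the only small wrinkle there is the degenerate all-zero-output case, which is handled by reading $c_T=0<B$ and erasing an arbitrary length-$B$ window.
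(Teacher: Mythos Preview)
The paper does not actually prove Fact~\ref{fact:cT}; it explicitly states ``We omit a justification of these results due to space constraints.'' So there is nothing to compare against directly. That said, your argument is correct and is precisely the natural analog of the standard column-distance argument (a version of which appears, commented out, in the paper's source for Fact~\ref{fact:dT}): necessity by exhibiting a minimum-span codeword whose support can be wiped out by a single admissible burst, and sufficiency by a subtract-and-slide induction together with an indistinguishability contradiction. Your handling of the guard-interval bookkeeping is right --- the key inequalities $s_{\text{next}}\ge (j+b-1)+T+1>j+\ell+T$ (since $\ell<b$) and $e_{\text{prev}}\le j-T-1<j+\ell$ do exactly what you need, and the implicit bound $B\le T$ needed for the window to contain the burst tail follows automatically from $c_T>B$ and $c_T\le T+1$. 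The degenerate $\by=0$ case is also disposed of correctly. This would serve perfectly well as the omitted justification.
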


We omit a justification of these results due to space constraints.

It follows from Facts~\ref{fact:dT} and~\ref{fact:cT} that a necessary and sufficient condition for any convolutional code
to recover each source packet with a delay of $T$  over a channel that introduces either $N$ arbitrary erasures or $B$ consecutive
erasures in a sliding window of length ${T+1}$ is that $d_T > N$ and $c_T > B$.  Thus it is of interest to investigate code constructions that simultaneously have
a large column distance and a large column span. 

It turns out that large column-distance and large column-span are conflicting requirements in general. The following Theorem provides an outer-bound on the set of all achievable pairs $(c_T, d_T)$ for any code of a given rate. 

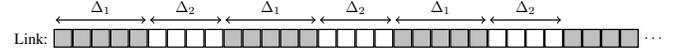
\begin{figure}
	\centering
	\resizebox{\columnwidth}{!}{
	\begin{tikzpicture}[node distance=0mm]
		\node                       (x1start) {Link:};
		\node[esym, right = of x1start]  (x100) {};
		\node[esym, right = of x100]     (x101) {};
		\node[esym, right = of x101]     (x102) {};
		\node[esym, right = of x102]     (x103) {};
		\node[esym, right = of x103]     (x104) {};
		\node[usym, right = of x104]     (x105) {};
		\node[usym, right = of x105]     (x106) {};
		\node[usym, right = of x106]     (x107) {};
		\node[usym, right = of x107]     (x108) {};
		\node[esym, right = of x108]     (x109) {};
		\node[esym, right = of x109]     (x110) {};
		\node[esym, right = of x110]     (x111) {};
		\node[esym, right = of x111]     (x112) {};
		\node[esym, right = of x112]     (x113) {};
		\node[usym, right = of x113]     (x114) {};
		\node[usym, right = of x114]     (x115) {};
		\node[usym, right = of x115]     (x116) {};
		\node[usym, right = of x116]     (x117) {};
		\node[esym, right = of x117]     (x118) {};
		\node[esym, right = of x118]     (x119) {};
		\node[esym, right = of x119]     (x120) {};
		\node[esym, right = of x120]     (x121) {};
		\node[esym, right = of x121]     (x122) {};
		\node[usym, right = of x122]     (x123) {};
		\node[usym, right = of x123]     (x124) {};
		\node[usym, right = of x124]     (x125) {};
		\node[usym, right = of x125]     (x126) {};
		\node[esym, right = of x126]     (x127) {};
		\node[esym, right = of x127]     (x128) {};
		\node[esym, right = of x128]     (x129) {};
		\node[esym, right = of x129]     (x130) {};
		\node      [right = of x130]     (x1end) {$\cdots$};

		\dimup{x100}{x104}{2mm}{$\Delta_1$};
		\dimup{x105}{x108}{2mm}{$\Delta_2$};
		\dimup{x109}{x113}{2mm}{$\Delta_1$};
		\dimup{x114}{x117}{2mm}{$\Delta_2$};
		\dimup{x118}{x122}{2mm}{$\Delta_1$};
		\dimup{x123}{x126}{2mm}{$\Delta_2$};
	\end{tikzpicture}}
	\caption{The periodic erasure channel used to prove an upper bound on capacity in Theorem~\ref{thm:bnd}. Here ${\Delta_1 = c_T-1}$
	and ${\Delta_2 = T-d_T+2}$ holds. The shaded symbols are erased while the remaining ones are received by the destination.}
	\label{fig:PEC}
\end{figure}

\begin{figure}[htbp]
	\centering
	\resizebox{5cm}{!}{
	\begin{tikzpicture}[node distance=0mm]
		\node                       (x1start) {Link:};
		\node[esym, right = of x1start]  (x100) {};
		\node[esym, right = of x100]     (x101) {};
		\node[esym, right = of x101]     (x102) {};
		\node[esym, right = of x102]     (x103) {};
		\node[esym, right = of x103]     (x104) {};
		\node[esym, right = of x104]     (x105) {};
		\node[esym, right = of x105]     (x106) {};
		\node[esym, right = of x106]     (x107) {};
		\node[usym, right = of x107]     (x108) {};
		\node[usym, right = of x108]     (x109) {};
		\node[usym, right = of x109]     (x110) {};
		\node[usym, right = of x110]     (x111) {};
		\node[usym, right = of x111]     (x112) {};
		\node[usym, right = of x112]     (x113) {};
		\node      [right = of x113]     (x1end) {$\cdots$};

		\dimdn{x100}{x103}{-2mm}{$c_T-d_T$};
		\dimup{x104}{x107}{2mm}{$d_T-1$};
		\dimup{x108}{x113}{2mm}{$T -d_T+2$};
		\dimdn{x100}{x107}{-10mm}{$c_T-1$};
	\end{tikzpicture}}
	\caption{One period of the periodic erasure channel in Fig.~\ref{fig:PEC}.}
	\label{fig:period}
\end{figure}
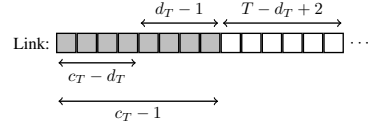

\begin{thm}[Column-Distance and Column-Span Tradeoff]
For any rate $R$ convolutional code with a column distance of $d_T$ and a column span of $c_T$, must satisfy :
\begin{align}
\left(\frac{R}{1-R}\right)c_T + d_T \le T + 1 + \frac{1}{1-R},
\label{eq:u-bnd}
\end{align}
as well as  $d_T \le c_T$ and $c_T \le T+1$.
\label{thm:bnd}
\end{thm}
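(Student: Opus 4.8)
The plan is to prove the main inequality \eqref{eq:u-bnd} by constructing the periodic erasure channel shown in Fig.~\ref{fig:PEC} and Fig.~\ref{fig:period}, arguing that any feasible code must be able to communicate reliably (with delay $T$) over it, and then bounding the resulting rate by the fraction of non-erased symbols. One period of this channel, as indicated in Fig.~\ref{fig:period}, has length $(c_T - 1) + (T - d_T + 2) = T + c_T - d_T + 1$, of which a burst of $c_T - 1$ symbols is erased and the remaining $T - d_T + 2$ symbols are received. Within the $c_T - 1$ erased symbols I would split the burst conceptually into a sub-burst of length $c_T - d_T$ followed by a sub-burst of length $d_T - 1$; the point of this decomposition is that the $d_T-1$ trailing erasures together with the first erasure of the \emph{next} period can be viewed as isolated/short patterns, while the whole run of $c_T-1$ is a single burst of length strictly less than $c_T$. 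Either way, in every sliding window of length $W = T+1$ the erasure pattern is admissible: it is contained in a single burst of length at most $c_T - 1 < c_T$, and by Fact~\ref{fact:cT} such bursts are correctable with delay $T$. (One must double-check that the spacing $T - d_T + 2$ between consecutive bursts is large enough that no window of length $T+1$ sees two bursts; since the burst length is $c_T - 1$ and $d_T \le c_T$, the gap plus one burst already exceeds... this is exactly the arithmetic to verify carefully.)

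Granting feasibility over this channel, the argument proceeds by a standard cut-set / counting bound. Over a long horizon of $\ell$ periods, the source produces $\ell \cdot (T + c_T - d_T + 1)$ packets, each carrying $\log|\cS| = k\log q$ bits, and all of them must be recovered with delay $T$. The decoder, however, only observes the non-erased channel symbols, of which there are $\ell \cdot (T - d_T + 2)$ per the period structure, each carrying at most $\log|\cX| = n\log q$ bits. Hence
\begin{align}
\ell \,(T + c_T - d_T + 1)\, k \le \ell\, (T - d_T + 2)\, n + o(\ell),
\end{align}
and dividing by $\ell\, n$ and letting $\ell \to \infty$ gives $R\,(T + c_T - d_T + 1) \le T - d_T + 2$, i.e. $R\,c_T + d_T (1-R) \le (T+2) - R(T+1) $, which rearranges to $\left(\tfrac{R}{1-R}\right) c_T + d_T \le T + 1 + \tfrac{1}{1-R}$ after dividing by $1-R$. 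Some care is needed to justify the $o(\ell)$ boundary term (the first and last partial periods) and to make the information-counting argument rigorous — most cleanly by a direct combinatorial/entropy argument that the map from the (finitely many) received symbols to the (more numerous) source symbols cannot be injective if the inequality were violated, contradicting exact recovery.

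The two side conditions are comparatively easy. For $c_T \le T+1$: the codeword $\bs\cdot\bG_T^s$ achieving the minimum span lives in a block of $T+1$ symbols, so its support has length at most $T+1$. For $d_T \le c_T$: for any $\bs$ with $\bs_0 \neq 0$, the span of $\bs\cdot\bG_T^s$ is at least its Hamming weight (the support length of a vector is at least the number of its nonzero entries), so taking the minimizer of the span gives $c_T = \mathrm{span}(\bs^\star\cdot\bG_T^s) \ge \mathrm{wt}(\bs^\star\cdot\bG_T^s) \ge d_T$; alternatively one notes $\mathrm{wt}(\bx) \le \mathrm{span}(\bx)$ pointwise and minimizes.

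The main obstacle, I expect, is the combinatorial bookkeeping in the first step: verifying that the specific period lengths $\Delta_1 = c_T - 1$ and $\Delta_2 = T - d_T + 2$ chosen in Fig.~\ref{fig:PEC} really do yield an admissible pattern in \emph{every} length-$(T+1)$ window — in particular that a window straddling the boundary between two periods never simultaneously contains parts of two distinct bursts in a way that violates both the $N$-erasure and the single-burst-of-length-$B$ constraints — and, dually, that the decoding-delay-$T$ recoverability guaranteed by Fact~\ref{fact:cT} genuinely lets the decoder reconstruct \emph{all} source packets from \emph{only} the received symbols in this periodic pattern (so that the counting bound applies to the true number of received symbols, not to some larger observation). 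Once the channel is correctly calibrated, the rate bound itself is routine.
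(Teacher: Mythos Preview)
Your overall strategy matches the paper's, and the rate-counting step in your second paragraph is correct once you establish that the code recovers every symbol over the periodic channel. The gap is precisely where you suspect. Your first attempt --- hoping that no length-$(T+1)$ window sees two bursts --- fails: the guard interval is $T - d_T + 2$, which is strictly less than $T$ whenever $d_T \ge 3$, so a window that begins near the tail of one burst does reach into the head of the next. You cannot close the argument using Fact~\ref{fact:cT} alone.

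The fix is the alternative you hint at but do not carry through. When a recovery window $[t,t+T]$ straddles two bursts it necessarily contains the entire guard interval of $T - d_T + 2$ non-erased symbols, hence contains exactly $(T+1)-(T-d_T+2)=d_T-1$ erasures, and now Fact~\ref{fact:dT} applies. Concretely, split the $c_T-1$ erased positions of a period into the first $c_T - d_T$ and the last $d_T-1$: symbols in the first group have recovery windows lying inside a single period and see one burst of length at most $c_T-1$ (handled by Fact~\ref{fact:cT}); symbols in the second group have straddling windows with $d_T-1$ erasures (handled by Fact~\ref{fact:dT}). Both parameters $c_T$ and $d_T$ are genuinely used --- this is why the period length is $T+c_T-d_T+1$ and not larger --- and with this case split in place your inequality follows exactly as you wrote it.
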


\begin{proof}
We consider a periodic erasure channel with a period of ${P = T+c_T-d_T+1}$ and suppose that in every such period the first
${B=c_T-1}$ symbols are erased. We claim that for any convolutional code with a column-span and column-distance of $c_T$
and $d_T$ respectively, the decoder can reconstruct every source packet from such an erased sequence. 

Consider the first period that spans the interval $[0, P-1]$. The first $c_T-d_T$ erased symbols all need to be recovered
by time $t=P-1$. Thus in the window of interest, these symbols only experience a single erasure burst of length $c_T-1$  or smaller.
From Fact~\ref{fact:cT} these symbols can be recovered by any code with column span of $c_T$.

The next $d_T-1$ symbols have a deadline after time $P-1$. To recover $\rvs[t]$ for $t \in [c_T-d_T+1, c_T-1]$ observe that the length ${T+1}$
window  $\cW_t = [t, t+T]$ has two erasure bursts --- one at the start and one at the end of the interval.  As shown in Fig.~\ref{fig:period} each
such interval has a total of $T-d_T+2$ non-erased symbols. Thus the total number of erased symbols equals $T+1 - (T-d_T+2) = d_T-1$. 
From Fact~\ref{fact:dT}, a code with a column distance of $d_T$ can recover all of these symbols. 

Finally for $t \in [d_T, P-1],$ the recovery window $\cW_t = [t, t+T]$ only sees a single-erasure burst of length $c_T-1$ and hence the column
span of $c_T$ suffices to recover these symbols. 

Having recovered all the symbols in $[0,P-1]$ by their deadline, we can cancel their effect in all future parity checks and repeat the same argument
for every other period. Thus we can recover all erased symbols. Thus the rate of the code is upper bounded by the capacity of the periodic erasure channel
which results in
\begin{align}
R \le  1- \frac{c_T-1}{T+c_T-d_T+1}.
\label{eq:Rbnd}
\end{align}
Rearranging, this equation reduces to~\eqref{eq:u-bnd}. The upper bound $d_T \le c_T$ follows by observing that a code that corrects $d_T-1$ arbitrary erasures in a sliding window of length ${T+1}$ trivially corrects an erasure burst of the same length. The bound ${c_T \le T+1}$ simply follows from the definition.
\end{proof}

\begin{remark} 
Substituting ${R = \frac{1}{2}},$ the expression in~\eqref{eq:u-bnd} reduces to the following upper bound
\begin{equation}
c_T + d_T \le T+3. \label{eq:sumbnd}
\end{equation}
We conclude that the $R=1/2$ codes found via a computer search in~\cite[Section V-B]{MartinianS04} are indeed optimal
as they all satisfy~\eqref{eq:sumbnd}.
We next propose a family of codes that meet the upper bound~\eqref{eq:sumbnd} when $R=1/2$.
\end{remark}

\section{Embedded Random Linear Codes}
We introduce a construction that provides a flexible tradeoff between the column-distance and column-span
discussed in Section~\ref{sec:dist-col}.  This family includes codes with maximum column distance and maximum column 
span as special cases. Hence we discuss these special cases first. 

\subsection{Maximum Column-Distance Codes}
\label{sec:RLC}
As stated in Theorem~\ref{thm:bnd} we always have that $c_T \ge d_T$.  For the maximum column distance,  we $c_T = d_T$ in~\eqref{eq:u-bnd},
\begin{equation}
d_T \le 1 + (1-R)(T+1). \label{eq:dTbnd}
\end{equation}
The upper bound is the singleton-bound equivalent for convolutional codes~\cite[Chapter 3]{zigangirov}. 
The upper bound is achieved whenever the generator matrix $\bG_s^T$ in~\eqref{eq:GsT} has a full rank property i.e., any set of $k(T+1)$
columns are linearly independent.
By selecting the entries in $\bG_s^T$ from a sufficiently large finite field, we can satisfy this property with high probability.
We will refer to this construction as a {\em Random Linear Code} (RLC). 

\subsection{Maximum Column-Span Codes}
\label{sec:SCo}

\begin{figure}[t]
\resizebox{\columnwidth}{!}{
\begin{tikzpicture}[node distance=1mm,
  ]
  \tikzstyle{triple2} = [rectangle split, anchor=text,rectangle split parts=3]
  \tikzstyle{double2} = [rectangle split, anchor=text,rectangle split parts=2]
  \tikzstyle{triple} = [draw, rectangle split,rectangle split parts=3]
	\tikzstyle{double} = [draw, rectangle split,rectangle split parts=2]
	\tikzset{block/.style={rectangle,draw}}
	
	
	\node[triple2, minimum width=1.6cm] (start) {$B$ Symbols
    \nodepart{second}
      $T-B$ Symbols
    \nodepart{third}
     \tikz{\node[double2] {$B$ \nodepart{second}Symbols};}
  };
	
  \node[triple,  right = of start,fill=light-gray,minimum width=1.6cm] (p1) {$u[0]$
    \nodepart{second}
      $v[0]$
    \nodepart{third}
      \tikz{\node[double2] {$u[-T]$ \nodepart{second}$+p_v(v^{-1})$};}
  };
  
  \node[triple, right = of p1,fill=light-gray,minimum width=1.6cm] (p11) {$u[1]$
    \nodepart{second}
      $v[1]$
    \nodepart{third}
      \tikz{\node[double2] {$u[-T+1]$ \nodepart{second}$+p_v(v^{0})$};}
  };
  
  \node [block, right of=p11,minimum width=1.65cm, minimum height=2.68cm,node distance=1.94cm, fill=light-gray] (p2) {$\cdots$};
   
  \node[triple,  right = of p2,fill=light-gray,minimum width=1.6cm] (p3) {$u[B-1]$
    \nodepart{second}
      $v[B-1]$
    \nodepart{third}
      \tikz{\node[double2] {$u[-T+B-1]$ \nodepart{second}$+p_v(v^{B-2})$};}
  };
  
  \node[triple,  right = of p3,minimum width=1.6cm] (p4) {$u[B]$
    \nodepart{second}
      $v[B]$
    \nodepart{third}
      \tikz{\node[double2] {$u[-T+B]$ \nodepart{second}$+p_v(v^{B-1})$};}
  };
  
  \node [block, right of=p4,minimum width=1.65cm, minimum height=2.68cm,node distance=1.96cm] (p5) {$\cdots$};
  
  \node[triple,  right = of p5,minimum width=1.6cm] (p6) {$u[T-1]$
    \nodepart{second}
      $v[T-1]$
    \nodepart{third}
      \tikz{\node[double2] {$u[-1]+$ \nodepart{second}$p_v(v^{T-2})$};}
  };
  
  \node[triple, right = of p6,minimum width=1.6cm] (p7) {$u[T]$
    \nodepart{second}
      $v[T]$
    \nodepart{third}
      \tikz{\node[double2] {$u[0]+$ \nodepart{second}$p_v(v^{T-1})$};}
  };
  
  \bracedn{p1}{p3}{-2mm}{\footnotesize{Erased Packets}};
  \bracedn{p4}{p6}{-2mm}{\footnotesize{Used to recover $v[0],\cdots ,v[B-1]$}};
  \bracedn{p7}{p7}{-2mm}{\footnotesize{Recover $u[0]$}};

\end{tikzpicture}}
\caption{A window of $T+1$ channel packets showing the code construction of Streaming Codes (SCo). $v^{t}$ denotes the set of symbols $(v[t-T],\dots,v[t])$. }
\label{fig:SCo_Construction}
\end{figure}

\begin{figure}[t]
\resizebox{\columnwidth}{!}{
\begin{tikzpicture}[node distance=1mm,
  ]
  \tikzstyle{triple2} = [rectangle split,rectangle split parts=3]
  \tikzstyle{double2} = [rectangle split,rectangle split parts=2]
  \tikzstyle{triple} = [draw, rectangle split,rectangle split parts=3]
	\tikzstyle{double} = [draw, rectangle split,rectangle split parts=2]
	\tikzset{block/.style={rectangle,draw}}
	
	
	\node[triple2,minimum width=1.6cm] (start) {$u$ Symbols
    \nodepart{second}
      $v$ Symbols
    \nodepart{third}
     \tikz{\node[double2] {$u$ \nodepart{second}Symbols};}
  };
	
  \node[triple, right = of start,minimum width=1.6cm] (p1) {$u[0]$
    \nodepart{second}
      $v[0]$
    \nodepart{third}
      \tikz{\node[double2] {$p_u(u^{-\Delta})$ \nodepart{second}$+p_v(v^{-1})$};}
  };
  
  \node[triple, right = of p1,minimum width=1.6cm] (p11) {$u[1]$
    \nodepart{second}
      $v[1]$
    \nodepart{third}
      \tikz{\node[double2] {$p_u(u^{-\Delta+1})$ \nodepart{second}$+p_v(v^{0})$};}
  };
  
  \node [block, right of=p11,minimum width=1.6cm, minimum height=2.71cm,node distance=1.94cm] (p2) {$\cdots$};
   
  \node[triple, right = of p2,minimum width=1.6cm] (p3) {$u[\Delta-1]$
    \nodepart{second}
      $v[\Delta-1]$
    \nodepart{third}
      \tikz{\node[double2] {$p_u(u^{-1})$ \nodepart{second}$+p_v(v^{\Delta-2})$};}
  };
  
  \node[triple, right = of p3,minimum width=1.6cm] (p4) {$u[\Delta]$
    \nodepart{second}
      $v[\Delta]$
    \nodepart{third}
      \tikz{\node[double2] {$p_u(u^{0})$ \nodepart{second}$+p_v(v^{\Delta-1})$};}
  };
  
  \node [block, right of=p4,minimum width=1.6cm, minimum height=2.71cm,node distance=1.95cm] (p5) {$\cdots$};
  
  \node[triple, right = of p5,minimum width=1.6cm] (p6) {$u[T-1]$
    \nodepart{second}
      $v[T-1]$
    \nodepart{third}
      \tikz{\node[double2] {$p_u(u^{T-\Delta-1})$ \nodepart{second}$+p_v(v^{T-2})$};}
  };
  
  \node[triple, right = of p6,minimum width=1.6cm] (p7) {$u[T]$
    \nodepart{second}
      $v[T]$
    \nodepart{third}
      \tikz{\node[double2] {$p_u(u^{T-\Delta})$ \nodepart{second}$+p_v(v^{T-1})$};}
  };
  

\end{tikzpicture}}
\caption{A window of $T+1$ channel packets showing the code construction of Embedded Random Linear Codes (E-RLC).}
\label{fig:ERLC_Construction}
\end{figure}

Clearly any convolutional code with a column span of $c_T \ge 2$ is guaranteed to have $d_T \ge 2$.
The later simply implies that at-least one erasure can be corrected in a window of length ${T+1}$.
Substituting $d_T=2$ in~\eqref{eq:u-bnd} and using $c_T \le {T+1}$, 
\begin{align}
c_T  \le 1 + T\cdot\min\left(\frac{1}{R}-1, 1\right).\label{eq:cTbnd}
\end{align}

A class of codes, SCo with this property is constructed in~\cite{MartinianT07, MartinianS04}. Due to space constraints do not review the code construction but refer the reader to~\cite{MartinianT07, MartinianS04} Instead, we describe a related construction that also achieves the maximum column span. The advantage of this construction is that it generalizes to constructions that simultaneously have large column span and column distance. This construction is illustrated in Fig.~\ref{fig:SCo_Construction} and the main steps are as described below.

\subsubsection*{Encoding}

\begin{enumerate}
\item Split each source symbol into a total of $T$ sub-symbols over ${\mathbb F}_q$, belonging to two groups as shown below.
\begin{align}
\bs[i] = \bigg\{\underbrace{\rvu_0[i],\ldots, \rvu_{B-1}[i]}_{=\bu[i]}, \underbrace{\rvv_0[i],\ldots, \rvv_{T-B-1}[i]}_{=\bv[i]}\bigg\}
\end{align}
\item Apply a $(T, T-B)$ systematic random linear code to the source symbols $\bv[i]$ and generate $B$ parity checks $\bp_v[i] = (p_0[i],\ldots, p_{B-1}[i])$
at time $i$ i.e.,
\begin{align}
\bp_v[i] = \sum_{j=1}^{T-1} \bv[i-j]\cdot \bG_j \label{eq:v-rlc}
\end{align}
where $\bG_j \in {\mathbb F}_q^{{T-B} \times B}$. It can be  verified from~\eqref{eq:dTbnd} that such a code can recover up-to $B$ erasures in a window of length $T$.

\item Apply a repetition code to $\bu[i]$ with a delay of $T$ and then combine them with $\bv[i]$ i.e.,

\begin{align}
\bx[i] = \left(\begin{array}{c}\bu[i] \\ \bv[i] \\ \bp_v[i] \oplus \bu[i-T]\end{array}\right). \label{eq:sco}
\end{align}
\end{enumerate}
Suppose that an erasure burst spans $t \in [0, B-1]$ (c.f.~Fig.~\ref{fig:SCo_Construction}). The receiver needs to recover $\bs[j]$ by time $j+T$ for $j \in \{0,\ldots, B-1\}$.
Our proposed decoder uses the parity checks of the random linear code to first recover all the symbols in $\bv[j]$ simultaneously by time ${T-1}$.
Having recovered these symbols the decoder sequentially recovers the symbols $\bu[j]$ at time $j+T$ using the repetition code.  More specifically
the decoder implements the following steps.

\subsubsection*{Decoding}
\begin{itemize}
\item Recover the parity checks symbols $\bp_v[B],\ldots, \bp_v[T-1]$ from $\bx[B],\ldots, \bx[T-1]$ by cancelling the symbols $\bu[t]$ for $t< 0$ that are not erased.
\item Recover the symbols $\bv[0],\ldots, \bv[B-1]$ from parity checks $\bp_v[B],\ldots, \bp_v[T-1]$ using random linear code~\eqref{eq:v-rlc}.
\item For $j \in [0,B-1]$, at time $j+T$, first compute the parity check $\bp_v[j+T]$ which is a function of symbols $\bv[i]$ that have been recovered already and then
subtract it from $\bu[j] + \bp_v[j+T]$ to recover $\bu[j]$. Thus the source symbol $\bs[j] = (\bu[j], \bv[j])$ is recovered by time ${j+T}$  although the symbols $\bv[j]$
is recovered by time ${T-1}$.
\item All the erased symbols are recovered by time $t=T+B-1$. The encoder can recover from a second erasure-burst starting at time $t=T+B$ or later. 
This is equivalent to the condition that $c_{T}= B+1$.
\end{itemize}

Notice that the proposed construction takes a RLC code over $\bv[\cdot]$ as a base code and {\em embeds} additional symbols $\bu[\cdot]$. The parity
checks of $\bu[\cdot]$ are simple repetition codes and directly combined with  $\bp_v[\cdot]$ after a shift of $T$. Thus the rate  increases over the base RLC code 
upon addition of $\bu[\cdot]$. In the generalization of this construction we replace the repetition code with another RLC code. 

\subsection{Proposed Construction}
The use of a repetition code in the previous section limits the column distance to $d_T=2$.
To improve the column distance we first replace the repetition code for $\bu[\cdot]$ with another random linear code. Furthermore
instead of applying a shift of $T$ to the parity checks of the $\bu[\cdot]$ symbols we apply a shift of $\Delta \le T$. In particular we construct the parity checks $\bp_v[i]$ as in~\eqref{eq:v-rlc}
and construct a second set of parity checks 
\begin{align}
\bp_u[i] = \sum_{j=0}^{T-\Delta} \bu[i-j] \bH_j.
\end{align}


We will assume that $\bu \in \mathbb{F}_q^u$ and $\bv \in \mathbb{F}_q^v$ and the parity checks $\bp \in \mathbb{F}_q^u$. 

\begin{align}
\bx[i] = \left(\begin{array}{c}\bu[i] \\ \bv[i] \\ \bp_v[i] \oplus \bp_u[i-\Delta]\end{array}\right). \label{eq:layered}
\end{align}
We will assume that the entries in the matrices of $\bH_j$ and $\bG_i$ are all sampled uniformly at random and $q$ is
sufficiently large so that all the sub-matrices of interest have either full row-rank or column-rank with high probability.
The code construction is illustrated in Fig.~\ref{fig:ERLC_Construction}.
The rate of the code is given by, \begin{align}R = \frac{u+v}{2u+v}.\label{eq:r-uv}\end{align}

We develop closed form expressions for the column span and column distance of the proposed code construction below.
\begin{prop}
The column span of the Embedded-Random Linear Code with a shift of $\Delta$ is given by
\begin{equation}
c_T = \begin{cases}
\frac{1-R}{R}\Delta +1, & R \le \frac{\Delta}{T+1} \\
(1-R)(T+1) +1, & R > \frac{\Delta}{T+1}
\end{cases}
\label{eq:cT-thresh}
\end{equation}
\label{prop:cT-thresh}
\end{prop}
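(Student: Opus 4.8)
The plan is to work directly from the definition of $c_T$. Since $\bs_0\neq0$ already forces $\bx[0]=(\bu[0],\bv[0],0)\neq0$, minimizing the span of $\bs\cdot\bG^s_T$ is the same as minimizing the index of the last nonzero output symbol, so $c_T=1+\min b$ taken over nonzero inputs supported on $[0,b]$ with $\bs_0\neq0$ whose output vanishes on $[b+1,T]$. By~\eqref{eq:layered}, ``output vanishes on $[b+1,T]$'' is precisely the linear system of $T-b$ equations over $\mathbb F_q^u$,
$$\bp_v[i]+\bp_u[i-\Delta]=0,\qquad i\in[b+1,T],$$
in the $(b+1)(u+v)$ unknown scalars $\{\bu[j],\bv[j]\}_{0\le j\le b}$. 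The whole argument turns on which sub-collection of these equations forces $\bs_0=0$ for small $b$; throughout I use that matrices assembled from the uniformly random $\bG_j,\bH_j$ have full row- or column-rank with high probability once the dimensions permit it, and that $\bH_0$ is an invertible $u\times u$ block with high probability.

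\emph{Lower bound} $c_T\ge\max\{\tfrac{1-R}{R}\Delta+1,\ (1-R)(T+1)+1\}$. For the second term: the displayed system has $(T-b)u$ scalar equations, and $(T-b)u\ge(b+1)(u+v)$ exactly when $b\le(1-R)(T+1)-1$; in that range it has full column rank, so only $\bs=0$ is a solution, and no admissible input can have span $\le(1-R)(T+1)$. For the first term, restrict attention to the equations with $i\in[b+1,\Delta-1]$: there $\bp_u[i-\Delta]$ carries a negative time index, hence vanishes, so these equations read $\bp_v[i]=0$ and involve only $\bv[0],\dots,\bv[b]$; for $b\le\tfrac{1-R}{R}\Delta-1$ their number $(\Delta-1-b)u$ of scalar constraints is at least $(b+1)v$, so they force $\bv[0]=\dots=\bv[b]=0$. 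With $\bv\equiv0$ the equation at $i=\Delta$ reduces to $\bp_u[0]=\bu[0]\bH_0=0$, so $\bu[0]=0$, i.e.\ $\bs_0=0$; hence no admissible input has span $\le\tfrac{1-R}{R}\Delta$. (In the range of $\Delta$ where this term dominates one has $b<\tfrac{1-R}{R}\Delta\Rightarrow b\le\Delta-1$, so $i=\Delta$ does lie in $[b+1,T]$.)

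\emph{Upper bound.} It suffices to produce one admissible input of span exactly $\max\{\tfrac{1-R}{R}\Delta+1,(1-R)(T+1)+1\}$; set $b$ one less than this. A dimension count shows the pertinent system is then under-determined, so one can pick $\bv[0],\dots,\bv[b]$ with $\bv[0]\neq0$ satisfying the pure-$\bv$ equations $\bp_v[i]=0$ ($i\in[b+1,\Delta-1]$), then pick $\bu[0],\dots,\bu[b]$ cancelling the residual parities $\bp_v[i]$ ($i\in[\Delta,T]$) --- this last step prescribes $\bp_u[t]$ for $t\in[0,T-\Delta]$ and is solvable precisely because $b\ge T-\Delta$ whenever $R\le\Delta/(T+1)$ --- and use the remaining freedom to arrange $\bu[b]\neq0$ so the support ends exactly at $b$. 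Since the two candidate values are equal at $R=\Delta/(T+1)$, the maximum is the piecewise formula~\eqref{eq:cT-thresh}. In decoding terms (Fact~\ref{fact:cT}) the matching statement is that a burst of length $c_T-1$ is corrected by inverting the $\bv$-relation on the clean window $i\in[B,\Delta-1]$ and then peeling off $\bu[0],\bu[1],\dots$ in order, each from the invertible block-Toeplitz system $\{\bp_u[j],\dots,\bp_u[j+T-\Delta]\}$.

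\emph{Main obstacle.} The crux is the additive coupling $\bp_v[i]+\bp_u[i-\Delta]$ inside every transmitted symbol: one parity layer is visible in isolation only on the window where the other has ``run out'' (negative time index), so the $\bu$- and $\bv$-subcodes cannot be handled separately, and one must locate exactly which constraint --- the length-$(\Delta-1-b)$ pure-$\bv$ block or the global equation count --- is the first to kill $\bs_0$ as $\Delta$ varies. The supporting rank statements for the structured random matrices (full row/column rank w.h.p.\ for $q$ large, invertibility of $\bH_0$) are routine but must be invoked carefully, and~\eqref{eq:cT-thresh} is exact only when $\tfrac{1-R}{R}\Delta$ and $(1-R)(T+1)$ are integers; otherwise floors appear.
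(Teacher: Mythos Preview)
Your argument is correct and, at its core, is the same equation-counting that the paper uses: both proofs pivot on whether the $(\Delta-1-b)$ ``pure-$\bv$'' parity rows alone kill the first symbol, or whether one must fall back to the full $(T-b)$-row system. The difference is one of viewpoint. The paper argues on the decoding side (via Fact~\ref{fact:cT}): it asks for the largest burst $B$ such that $\bs[0]$ is recoverable, and describes two decoding strategies---sequential recovery of the $\bv$'s on $[B,\Delta-1]$ followed by $\bu[0]$ at time $\Delta$, versus joint recovery over $[B,T]$---whose success thresholds are exactly your two counts. You instead work on the codeword side, directly minimizing the support length. Your route has the advantage that it makes the upper bound on $c_T$ explicit: the paper's proof, strictly read, only exhibits decoders that succeed up to $B=\max\{\tfrac{1-R}{R}\Delta,(1-R)(T+1)\}$ and does not separately argue failure at $B+1$, whereas you sketch an actual low-span codeword. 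One small point: your upper-bound construction is written for the regime $R\le\Delta/(T+1)$; in the complementary regime the simpler move is to note that the full $(T-b)u\times(b+1)(u+v)$ system is already strictly under-determined at $b=(1-R)(T+1)$, and a generic kernel vector has $\bs_0\neq0$ (else its time-shift would contradict the lower bound at $b-1$).
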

\begin{proof}
To compute the column span it is sufficient to find  largest erasure burst length $B$ starting at time $t=0,$ such that $\bs[0]$ can be recovered by time $t=T$.
Note that the parity checks $p_\bu[\cdot]$ involving $\bu[0],\ldots \bu[B-1]$ appear from time $t=\Delta, \ldots, \Delta+B-1$. The parity checks in the interval $[B, \Delta-1]$
do not involve any $\bu[\cdot]$ symbols that are erased.

We first find the condition under which the parity-checks in the interval $[B,\Delta-1]$ can be used to recover all the $\bv[\cdot]$ symbols in time $[0, B-1]$.  Since there are a 
total of $\Delta-B$ parity check symbols, each contributing $u$ equations and a total of $B$ erased symbols, each generating $v$ unknowns we must have that
\begin{equation}
B\cdot v \le (\Delta-B) u\label{eq:B-seq-recovery}
\end{equation}
which implies from~\eqref{eq:r-uv} that $B\le \frac{1-R}{R}\Delta$. Once all the erased $\bv[\cdot]$ symbols are recovered, their contribution can be cancelled
from future parity checks and the symbol $\bu[0]$ can be recovered at time $\Delta\le T$. 

If~\eqref{eq:B-seq-recovery} is not satisfied  then all the $\bu[\cdot]$ and $\bv[\cdot]$ symbols need to be simultaneously recovered at time $t=T$.
There are a total of ${T+1-B}$ non-erased parity check symbols in the interval $[B,T]$ and each parity check contributes $u$ equations. The total number of unknowns
from the $B$ erased symbols is $B(u+v)$. Thus we must have that
\begin{equation}
B(u+v) \le (T+1-B)u
\end{equation}
which leads to $B \le(1-R)(T+1)$. Thus it follows that the maximum burst length that can be corrected is given by
\begin{align}
B = \max\bigg\{(1-R)(T+1), \frac{1-R}{R}\Delta\bigg\},
\end{align}
from which the claim easily follows.
\end{proof}

\begin{remark}
Our result in Prop.~\ref{prop:cT-thresh} shows that to improve the column span over a random linear code one must take the 
shift to satisfy $\Delta \ge R\cdot(T+1)$. 
\end{remark}

\begin{prop}
\label{prop:dT-thresh}
The column-distance of the Embedded-Random Linear Code with a shift of $\Delta \ge R(T+1) $ and $R \ge \frac{1}{2}$ is given by the following
\begin{align}
d_T= \frac{1-R}{R}(T-\Delta) +2\label{eq:dT-thresh}
\end{align}
\end{prop}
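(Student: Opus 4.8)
The plan is to determine $d_T$ from its two equivalent descriptions: $d_T$ is the least number of non-zero packets $\bx[i]$, $i\in\{0,\dots,T\}$, over all codewords with $\bs[0]\neq 0$, and, dually, the least number of packet erasures one can place in $\{0,\dots,T\}$ so that $\bs[0]$ is not recoverable from $\bx[0],\dots,\bx[T]$. Write $\bq_i:=\bp_v[i]+\bp_u[i-\Delta]$ for the third component of $\bx[i]$. I would get ``$\le$'' by exhibiting a light codeword and ``$\ge$'' by a decoding count; combined with Fact~\ref{fact:dT} this also certifies that the code corrects $d_T-1$ arbitrary erasures per window of length $T+1$ at delay $T$.

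\emph{Upper bound.} Take $\bu[0]\neq 0$, $\bu[i]=0$ for $i\ge 1$, and $\bv[\cdot]$ supported on $\{\Delta,\dots,\Delta+a-1\}$. Every packet in $[1,\Delta-1]$ is then zero, since there $\bq_i=\bp_v[i]=0$ (no $\bv[\cdot]$ symbol precedes $\Delta$) and $\bp_u[i-\Delta]=0$; and for $i\in[\Delta+a,T]$ the packet vanishes iff $\sum_{l=\Delta}^{\Delta+a-1}\bv[l]\,\bG_{i-l}=-\bu[0]\,\bH_{i-\Delta}$. This is $u(T-\Delta-a+1)$ scalar equations in the $va$ coordinates of the $\bv[l]$'s, solvable for generic $\bG_j,\bH_j$ once $va\ge u(T-\Delta-a+1)$, i.e.\ $a\ge\frac{u}{u+v}(T-\Delta+1)=\frac{1-R}{R}(T-\Delta+1)$. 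Taking $a$ minimal (using $\frac{1-R}{R}\le 1$, valid since $R\ge\frac12$, so the interval still fits in $[0,T]$) produces a codeword with exactly $1+a=\frac{1-R}{R}(T-\Delta)+2$ non-zero packets, whose support is $\{0\}\cup\{\Delta,\dots,\Delta+a-1\}$.

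\emph{Lower bound.} I would show that if at most $\frac{1-R}{R}(T-\Delta)+1$ packets in $[0,T]$ are erased then $\bs[0]=(\bu[0],\bv[0])$ is recovered by time $T$, which by the duality above forces $d_T\ge\frac{1-R}{R}(T-\Delta)+2$. Assume $\bx[0]$ is erased (otherwise done) and decode in two phases. Phase~1: from the available parities $\bq_i=\bp_v[i]$, $i\in[1,\Delta-1]$ — which contain no $\bu[\cdot]$ symbol — solve for the erased $\bv[\cdot]$ symbols of index $\le\Delta-2$, in particular $\bv[0]$; an equations-vs.-unknowns count shows this works because $\Delta\ge R(T+1)$ (and $\Delta\le T$). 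Phase~2: from the available parities $\bq_i$, $i\in[\Delta,T]$ — none of which, since $2\Delta>T$, involves an erased $\bu[\cdot]$ of index $>T-\Delta$ — the remaining unknowns after substituting everything recovered so far are $\bu[0]$, the erased $\bu[\cdot]$'s of small index, and the erased ``late'' $\bv[\cdot]$'s. The binding configuration is all erasures inside $[\Delta,T]$, say $m$ of them: then there are $u(T-\Delta+1-m)$ equations for $u+vm$ unknowns, and $\bu[0]$ is pinned down exactly when $u(T-\Delta)\ge m(u+v)$, i.e.\ $m\le\frac{1-R}{R}(T-\Delta)$; splitting the erasures between $[1,\Delta-1]$ and $[\Delta,T]$ only relaxes this, since the two phases together just need $(e_1+e_2)(u+v)\le u(T-\Delta)$.

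The step I expect to be the real work is the linear-algebra bookkeeping behind these counts: verifying that the block lower-triangular coefficient matrices in the construction and in Phases~1--2 attain full row/column rank for generic $\bG_j,\bH_j$ and large $q$ (precisely the genericity already assumed for the construction), taking care to exclude from the unknown list symbols such as $\bv[T]$ that are irrelevant to $\bs[0]$ and appear in no available equation, and confirming that no erasure split helps the adversary beyond the plain bound $|E|\le\frac{1-R}{R}(T-\Delta)+1$. The remaining ingredients — the identities $\frac{u}{u+v}=\frac{1-R}{R}$, $\frac{v}{u+v}=\frac{2R-1}{R}$, and treating $\frac{1-R}{R}(T-\Delta)$ as an integer to drop ceilings — are routine.
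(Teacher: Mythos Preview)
Your proposal is correct and proceeds along lines close to the paper's, with two differences worth recording.

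First, you prove both directions, whereas the paper only argues the lower bound $d_T \ge \frac{1-R}{R}(T-\Delta)+2$ (that $d_T-1$ arbitrary erasures are always corrected); your explicit light-codeword construction supplying the matching upper bound is an addition not present in the paper.

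Second, for the lower bound the paper takes a slightly different route that avoids your Phase~1 claim. Rather than asserting that all early $\bv[\cdot]$'s are recovered from the parities in $[1,\Delta-1]$, the paper first observes that if $\bv[0]$ is \emph{not} recovered by time $\Delta-1$ then every parity in $[0,T]$ furnishes an independent equation involving $\bs[0]$ and the full RLC bound $(1-R)(T+1)+1$ already applies. If $\bv[0]$ \emph{is} recovered, the paper partitions the remaining erasures into $k_1$ (in $[1,\Delta-1]$ with $\bv[\cdot]$ recovered), $k_2$ (in $[1,\Delta-1]$ with $\bv[\cdot]$ not recovered), and $k_3$ (in $[\Delta,T]$), and counts unknowns versus equations only over $[\Delta,T]$: unknowns at most $(k_1{+}1)u+k_2(u{+}v)+k_3v$, equations $(T{-}\Delta{+}1{-}k_3)u$, so failure forces $(k_1{+}k_2{+}k_3)u+(k_2{+}k_3)v\ge(T{-}\Delta)u$ and hence $k_1{+}k_2{+}k_3\ge\frac{1-R}{R}(T{-}\Delta)$. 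Your Phase~1 assertion is tantamount to $k_2=0$ always; it is in fact correct --- the chain $R(T{+}1)\le\Delta\le T\Rightarrow T\ge R/(1{-}R)$ supplies exactly the inequality $(\Delta{-}1{-}e_1)u\ge(1{+}e_1)v$ for every admissible $e_1$ --- but the paper's $(k_1,k_2,k_3)$ bookkeeping sidesteps that verification and handles all erasure splits in one stroke.
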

\begin{proof}

We need to show that for any erasure sequence in the window $[0,T]$ if the symbol $\rvs[0]$ is not recovered by time $t=T$ then the number of erasures must be at-least $d_T$. 

We first observe that if the symbol $\bv[0]$ is not recovered by time ${t=\Delta-1}$ then the code behaves like a random linear code. Every parity check sub-symbol provides one independent equation. The total number of erasures necessary is given by the column distance of the random linear code~\eqref{eq:dTbnd}, which is the maximum possible column distance and exceeds~\eqref{eq:dT-thresh}.

Thus we only need to consider those erasure patterns where $\bv[0]$ is recovered by time ${t=\Delta-1}$. In addition to $s[0]$ we consider three groups of symbols. Group $1$ consists of  $k_1$ symbols that are erased in time $t \in [1,\Delta-1]$ such that the corresponding $\bv[\cdot]$ is recovered by time ${t=\Delta-1}$. Group $2$ consists of $k_2$ symbols erased in the same interval whose $\bv[\cdot]$ symbols are not recovered by time ${t=\Delta-1}$. Group $3$ consists of $k_3$  symbols erased in the time $t \in [\Delta,T]$. We seek the minimum possible value of ${k_1 + k_2 + k_3}$ such that the symbol $\bu[0]$ is not recovered by time $t=T$.

Since $R \ge \frac{1}{2}$ and $\Delta \ge R(T+1)$ we have that $2\Delta \ge T+1$. Thus the $\bu[\cdot]$ symbols of group $3$ are involved in parity checks after time $t=T$ and hence do not need to be considered. 

We consider a possibly sub-optimal decoder that attempts to recover the remaining symbols using only the parity checks in the interval $t \in [\Delta,T]$. Clearly, by using such a sub-optimal decoder we can only under-estimate the number of erasures that can be corrected. Since the symbol $\bu[0]$ start appearing in the parity checks starting at time $t=\Delta$ (c.f.~\eqref{eq:layered}) and is not recovered by time $t=T$ (by assumption), each of the parity checks sub-symbols
in the interval $[\Delta,T]$ provides one non-redundant equation. Thus it follows that the total number of unknown associated with the remaining erased symbols must exceed the number of available parity check equations. The total number of unknowns is upper bounded by a sum of three terms:
\begin{itemize}
\item The $\bu[\cdot]$ symbols in group $1$ and $\rvs[0]$: $N_1=(k_1+1)u$
\item Both $\bu[\cdot]$ and $\bv[\cdot]$ symbols in group 2: $N_2 = k_2(u+v)$
\item The $\bv[\cdot]$ symbols in group $3$: $N_3 = k_3\cdot v$
\end{itemize}
The total number of available equations from the parity checks in the interval $[\Delta,T]$ where there are $k_3$ erasures is given by
$(T-\Delta-k_3+1)u$. Thus a necessary condition under which $\bu[0]$ is not recovered  is given by:
\begin{align}
N_1 + N_2 + N_3 \ge (T-\Delta-k_3+1)u
\end{align}
Upon substituting for $N_i$ and through some simple algebra we get that
\begin{align}
(k_1 + k_2 + k_3)u + (k_2 + k_3)v \ge (T-\Delta)u
\end{align}
which in turn implies that
\begin{align}
k_1 + k_2 + k_3 \ge \frac{u}{u+v}(T-\Delta) = \frac{1-R}{R}(T-\Delta)
\end{align}
Thus the total number of erasures in any such sequence must exceed $1+ \frac{1-R}{R}(T-\Delta)$ as stated in~\eqref{eq:dT-thresh}.
\end{proof}

\begin{remark}
For the special case of $R=1/2$ and ${\Delta \ge \frac{T+1}{2}}$ note that from Prop.~\ref{prop:cT-thresh} that $c_T = \Delta +1$. From Prop.~\ref{prop:dT-thresh} we have that $d_T = T-\Delta +2$. Thus we have that $d_T+c_T = T+3,$ which meets the upper bound in
~\eqref{eq:sumbnd}. Thus the proposed embedded-random linear code constructions provide a family of codes that are optimal for $R=1/2$.
As discussed before, the embedded-random linear codes are also optimal in the special case of maximum column span or maximum column distance. Their optimality in other cases remains to be seen. The gaps between the upper-bound given in~\eqref{eq:sumbnd} and values achieved by Embedded-RLC codes are illustrated in Fig.~\ref{fig:cTdT_Tradeoff}.
\end{remark}

\begin{figure}
\centering
\includegraphics[width=\linewidth]{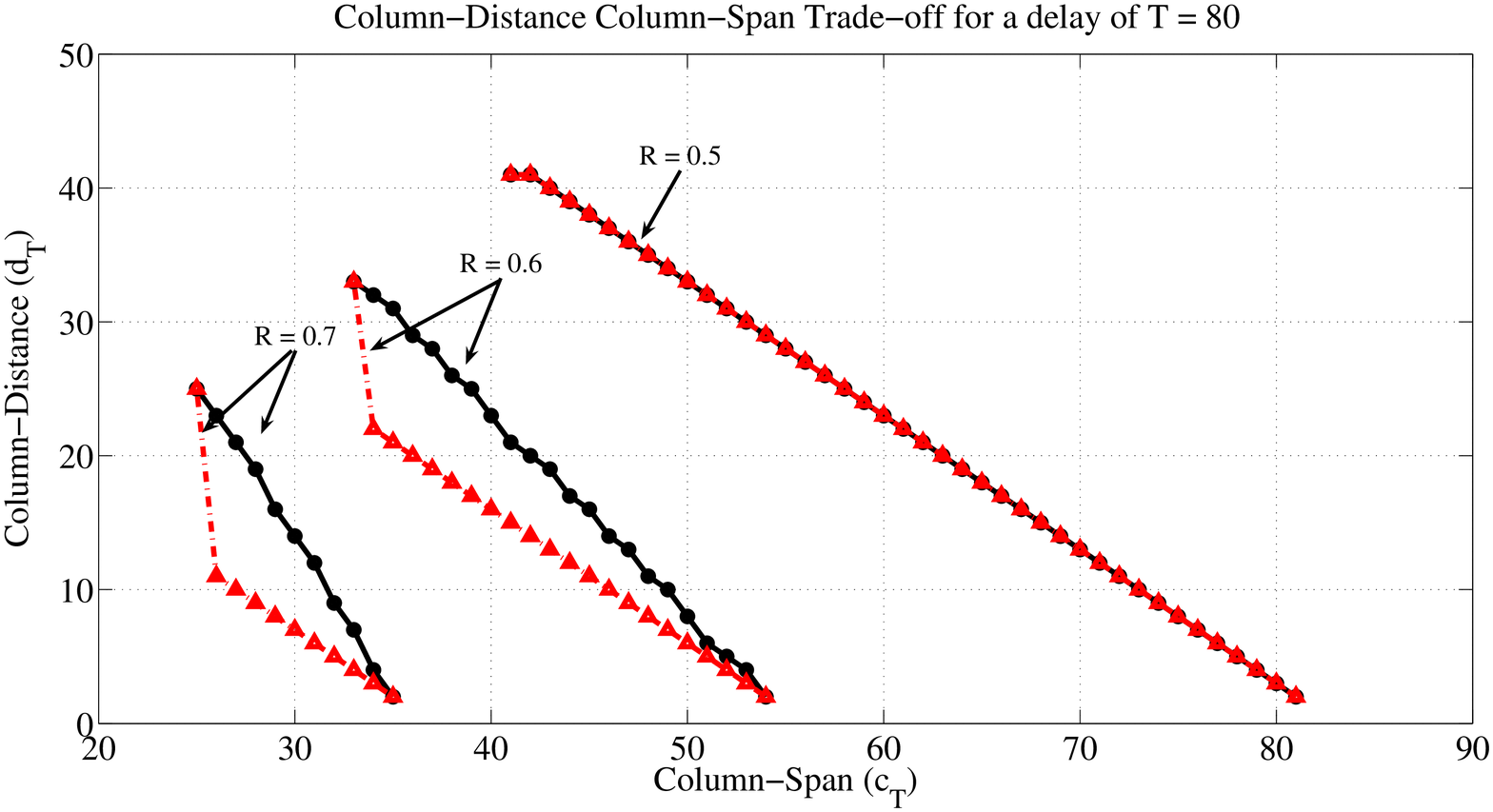}
\caption{Comparison of Upper-Bounds (solid black lines) and Lower-Bounds achieved by Embedded-RLC codes (broken red lines) for Column-Distance Column-Span Trade-off. Each pair correspond to different rates of $R = 0.5$, $0.6$ and $0.7$ from right to left but a fixed decoding delay of $T=80$ symbols.}
\label{fig:cTdT_Tradeoff}
\end{figure}

\section{Simulation Results - Gilbert-Elliott Channel Model}

\begin{figure*}
  \begin{minipage}[b]{0.5\linewidth}
    \centering
    \includegraphics[width=\linewidth]{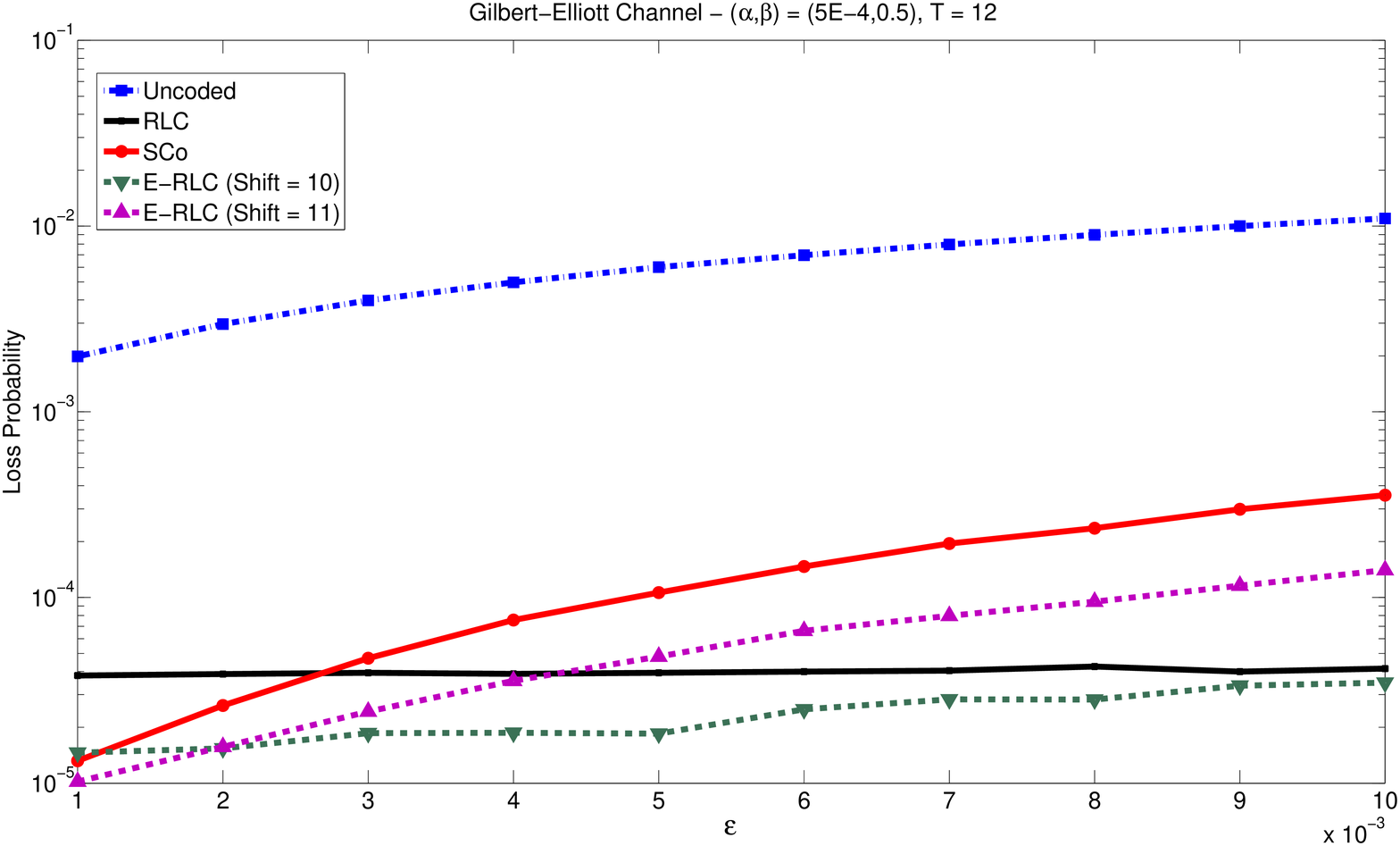}
    \caption{Simulation over a Gilbert-Elliott Channel with $(\al,\beta) = (5 \times 10^{-4},0.5)$. All codes are evaluated using a decoding delay of $T=12$ symbols.}
    \label{fig:Gilbert_T12}
  \end{minipage}
  \hspace{0.5cm}
    \begin{minipage}[b]{0.5\linewidth}
    \centering
\includegraphics[width=\linewidth]{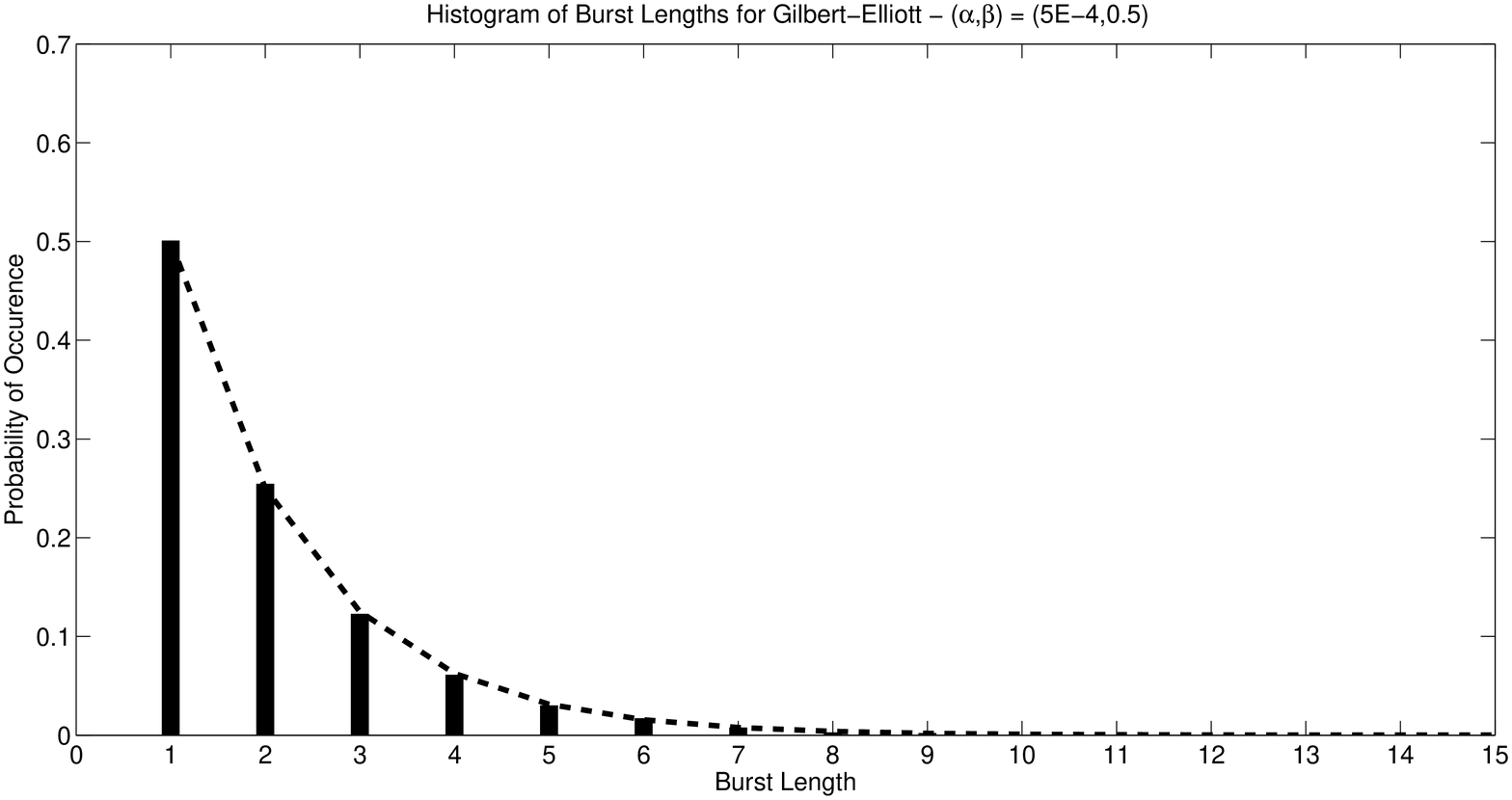}
\caption{Histogram of Bursts when $\beta =0.5$ which approximates a geometric distribution (shown dotted) with success probability of $0.5$.}
\label{fig:Gilbert_T12_Burst}
  \end{minipage}
\end{figure*}

\begin{figure*}
  \begin{minipage}[b]{0.5\linewidth}
    \centering
    \includegraphics[width=\linewidth]{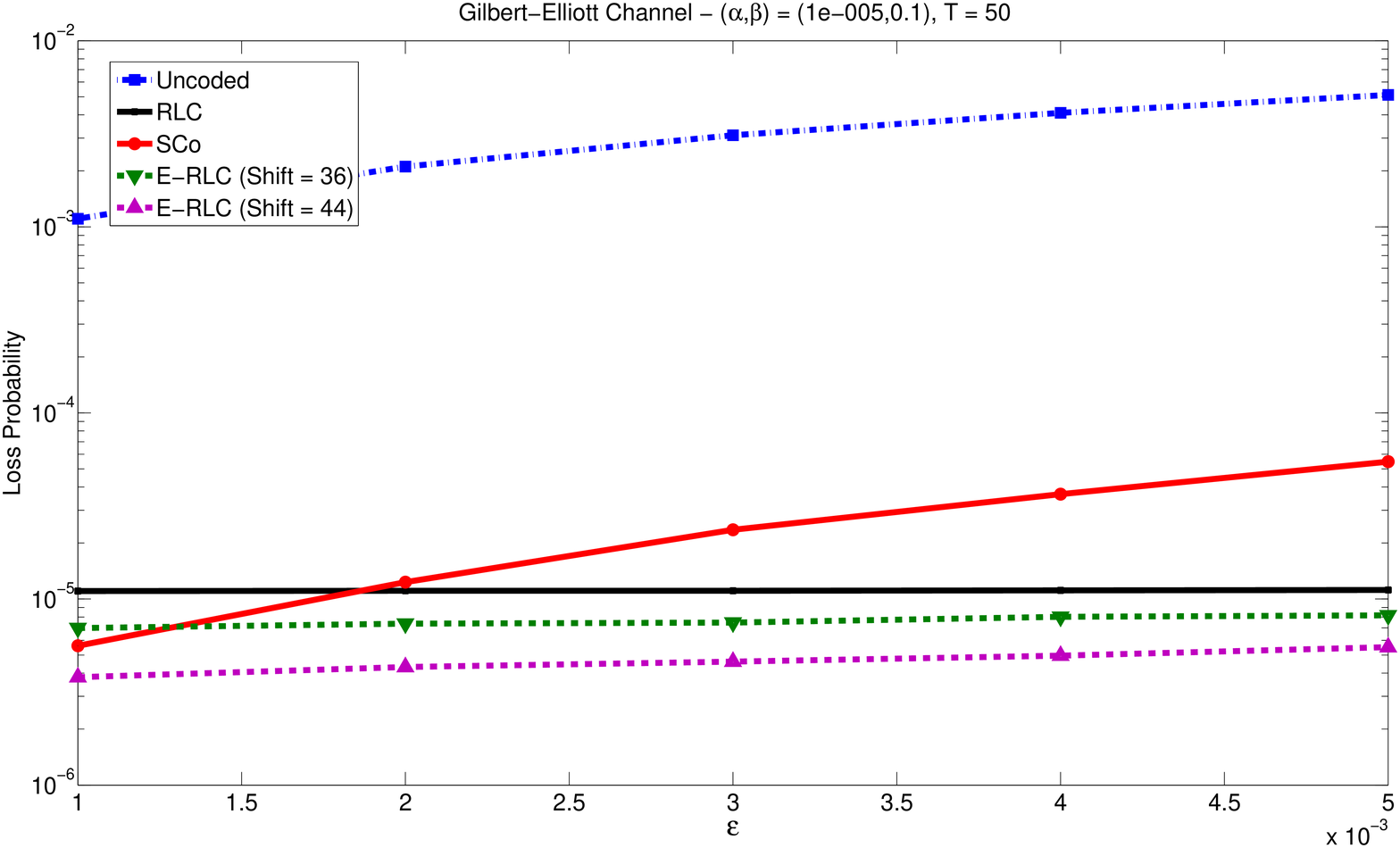}
    \caption{Simulation over a Gilbert-Elliott Channel with $(\al,\beta) = (10^{-5},0.1)$. All codes are evaluated using a decoding delay of $T=50$ symbols.}
    \label{fig:Gilbert_T50}
  \end{minipage}
  \hspace{0.5cm}
    \begin{minipage}[b]{0.5\linewidth}
    \centering
\includegraphics[width=\linewidth]{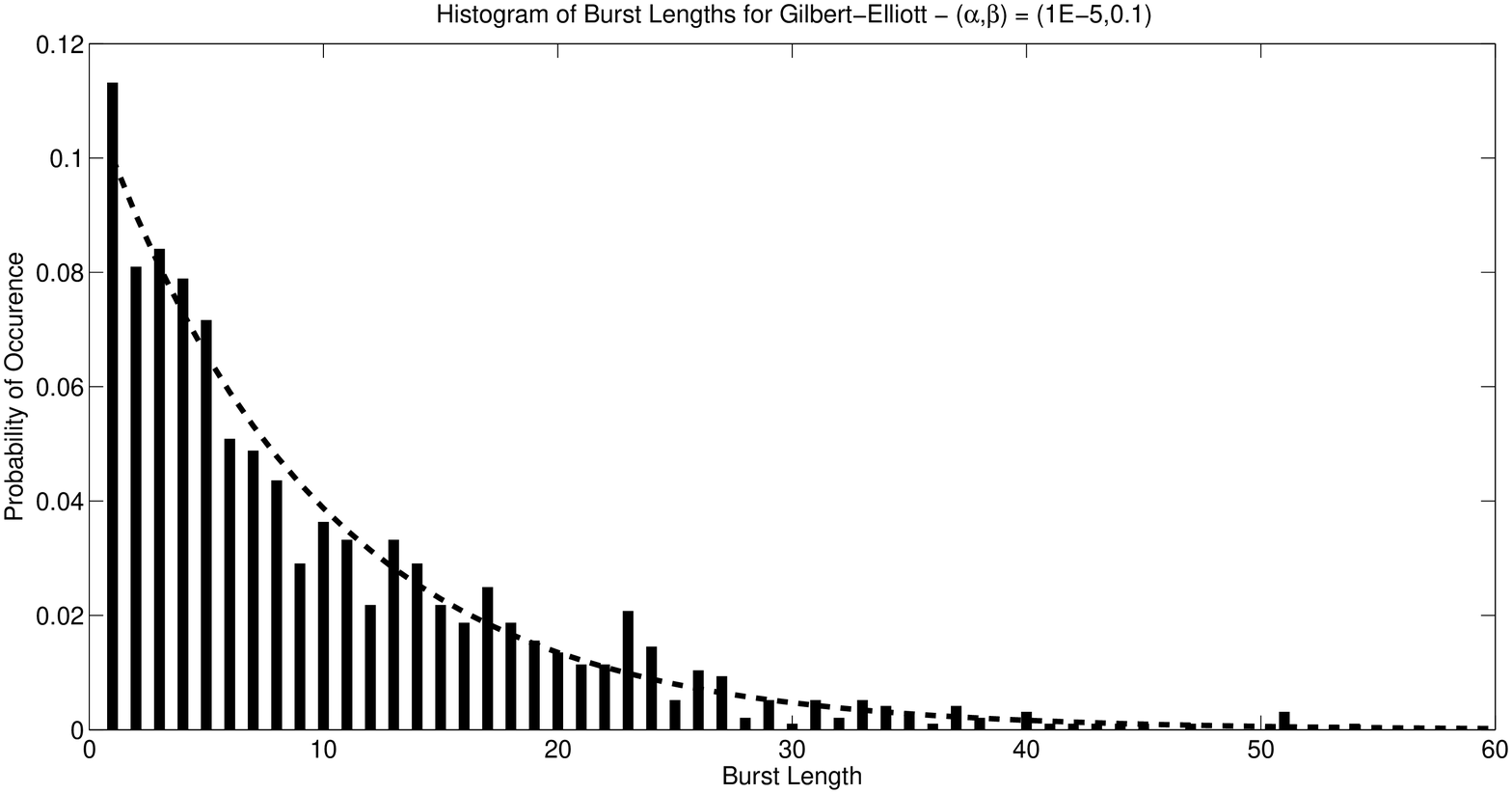}
\caption{Histogram of Bursts when $\beta =0.1$ which approximates a geometric distribution (shown dotted) with the same success probability.}
\label{fig:Gilbert_T50_Burst}    
  \end{minipage}
\end{figure*}

We consider a two-state Gilbert-Elliott channel model~\cite{elliott,gilbert}. 
In the ``good state" each channel packet is lost with a probability of $\eps$ whereas in the ``bad state" each channel packet is lost with a probability of $1$. We note that the average loss rate of the Gilbert-Elliott channel is given by
\begin{align}
\Pr(\cE) = \frac{\beta}{\beta+\al}\eps + \frac{\al}{\al+\beta} \label{eq:loss-uncoded}.
\end{align}
where $\al$ and $\beta$ denote the transition probability from the good state to the bad state and vice versa. 

 



As long as the channel stays in the bad state the channel behaves as a burst-erasure
channel. The length of each burst is a Geometric random variable with mean of $\frac{1}{\beta}$.
When the channel is in the good state it behaves as an i.i.d.\ erasure channel with an erasure
probability of $\eps$. The gap between two successive bursts is also a geometric random variable
with a mean of $\frac{1}{\al}$.

Fig.~\ref{fig:Gilbert_T12} and Fig.~\ref{fig:Gilbert_T50} show the simulation performance over a Gilbert-Elliott Channel.
The parameters chosen in the two plots are as shown in Table~\ref{tab:GE-Params}.

\begin{table}[!htb]
	\centering
		\begin{tabular}{l|c|c}\\\hline
			& Fig.~\ref{fig:Gilbert_T12} & Fig.~\ref{fig:Gilbert_T50} \\\hline
			Delay $T$ & 12 & 50 \\
			$(\alpha,\beta)$ & $(5 \times 10^{-4},0.5)$ & $(10^{-5},0.1)$ \\
			Channel Length & $10^7$ & $10^8$ \\
			Rate $R$ & $12/23$ & $50/99$ \\\hline
		\end{tabular}
	\caption{Gilbert-Elliott Channel Parameters}
	\label{tab:GE-Params}
\end{table}

We note that the channel parameters for the $T=12$ case are the same as those used in~\cite[Section 4-B, Fig.~5]{MartinianS04}. For the case when $\eps = 0$ we have verified   that our simulations agree with the results in~\cite{MartinianS04}. Note that we do not  use $R=0.5$, because the SCo codes degenerate into simple repetition codes for this case~\cite{MartinianT07}. We use the next highest rate for each class of codes. The choice of $\beta$ is smaller for $T=50$ because we expect to be able to correct longer bursts because of the larger delay. The histogram of burst lengths for both channels is shown in Fig.~\ref{fig:Gilbert_T12_Burst} and Fig.~\ref{fig:Gilbert_T50_Burst}. The choice of $\al$ is taken to be sufficiently small so that the contribution from failures due to small guard periods between bursts is not dominant. Note that our proposed constructions degenerate to RLC codes when the inter-burst gaps are smaller than the decoding delay and hence the performance gains are not observed in that regime.




\begin{table}[!htb]
	\centering
		\begin{tabular}{c|c|c|c}
		$(R,T)$& Shift $(\Delta)$& Col.-Span $(c_T)$ & Col.-Distance $(d_T)$ \\\hline \hline
		$(\frac{12}{23},12)$ & 10 & 10& 3 \\
		& 11 & 11& 2 \\\hline
		$(\frac{50}{99},50)$ & 36 & 36& 15 \\
		& 44 & 44 & 7 \\\hline
		$(\frac{40}{79},40)$ & 32 & 32 & 9 \\
		& 36 & 36 & 5 \\\hline
		$(\frac{80}{159},80)$ & 48 & 48 &33 \\
		& 52 & 52 &29 \\
		& 60 & 60 &21 \\\hline
		\end{tabular}
		\caption{Column Distance and Span for Embedded-RLC Codes for ome Rates $R$ and Delays $T$.}
		\label{tab:ERLCProperties}
		\vspace{-2em}
\end{table}


In Fig.~\ref{fig:Gilbert_T12} and Fig.~\ref{fig:Gilbert_T50} we observe that our Embedded-RLC constructions provide improved error correction capability over both the SCo codes and RLC codes by virtue of their longer column span and column distance.  We discuss the performance of various codes in more detail below.

\begin{itemize}

\item {\bf Uncoded Loss Rate}: The uppermost plot in Fig.~\ref{fig:Gilbert_T12} and Fig.~\ref{fig:Gilbert_T50} is the uncoded packet loss rate. It agrees well with the expression in~\eqref{eq:loss-uncoded}.
\item {\bf Random Linear Codes}: The solid horizontal black line is the loss-rate of the Random Linear Code (RLC) in Section~\ref{sec:RLC} which has the maximum column distance.
We see that for the range of $\eps$ that we consider the RLC is able to correct all the erasures in the good state and hence the loss rate does not depend on $\eps$. The only losses that occur are when the burst-lengths in the bad state exceed $B =6$ in Fig.~\ref{fig:Gilbert_T12} and $B=25$ in Fig.~\ref{fig:Gilbert_T50}. The loss rates  of $\Pr(\cE) \approx 4 \times 10^{-5}$ and $\Pr(\cE) \approx 10^{-5}$ observed in the two cases are consistent with the probability of observing such long bursts.

\item {\bf Streaming Codes}: The SCo Codes are represented by the red plot. We see that in the interval of $\eps$ considered, there is a noticeable increase in the loss rate. The performance is better than RLC codes for $\eps \approx 10^{-3}$ but deteriorates quickly as we increase $\eps$. The packet-loss probability increases in proportion to $\eps^2$ as $d_T=2$ for these codes. 


\item {\bf Embedded-Random Linear Codes}:  The associated column-distance and column-span of these codes from Proposition~\ref{prop:cT-thresh} and~\ref{prop:dT-thresh} are indicated in Table~\ref{tab:ERLCProperties}.  For $T=12$ case, the performance of the Embedded-Random Linear Codes with shifts of $\Delta \in \{10,11\}$ is shown in Fig.~\ref{fig:Gilbert_T12}. The shift of $\Delta =11$ also has a column distance of $2$. It follows a similar trend as SCo codes and its performance deteriorates quickly with $\eps$. The shift of $\Delta = 10$ provides the best performance in Fig.~\ref{fig:Gilbert_T12}. This code has a column-distance of $d_T = 3$. We observe that the effect of i.i.d. erasures is not significant for most of the interval of $\eps$ considered as the loss rate scales as $\eps^3$. These code do have a smaller column span than SCo codes and hence its performance is slightly worse in the other extreme of $\eps\approx 10^{-3}$.   For $T=50$ case, the Embedded-RLC codes with shifts of $\Delta \in \{36, 44\}$ are shown in Fig.~\ref{fig:Gilbert_T50}. Since these codes have a column distance of at-least $7$ the performance does not deteriorate as noticeably as the SCo codes in the range of $\eps$ of interest. The shift of $44$ has the best performance because it has a longer column-span and hence can correct longer erasure bursts.
\end{itemize}

\section{Simulation Results - Fritchman Channel Model}
In this section, we consider a special class of Fritchman Channel Model~\cite{fritchman} with a total of ${N+1}$ states. One of the states is the error free state and the remaining $N$ states are error states. Fritchman  and related higher order Markov models are commonly used to model fade-durations in mobile links.
\begin{figure}
\includegraphics[width=\linewidth]{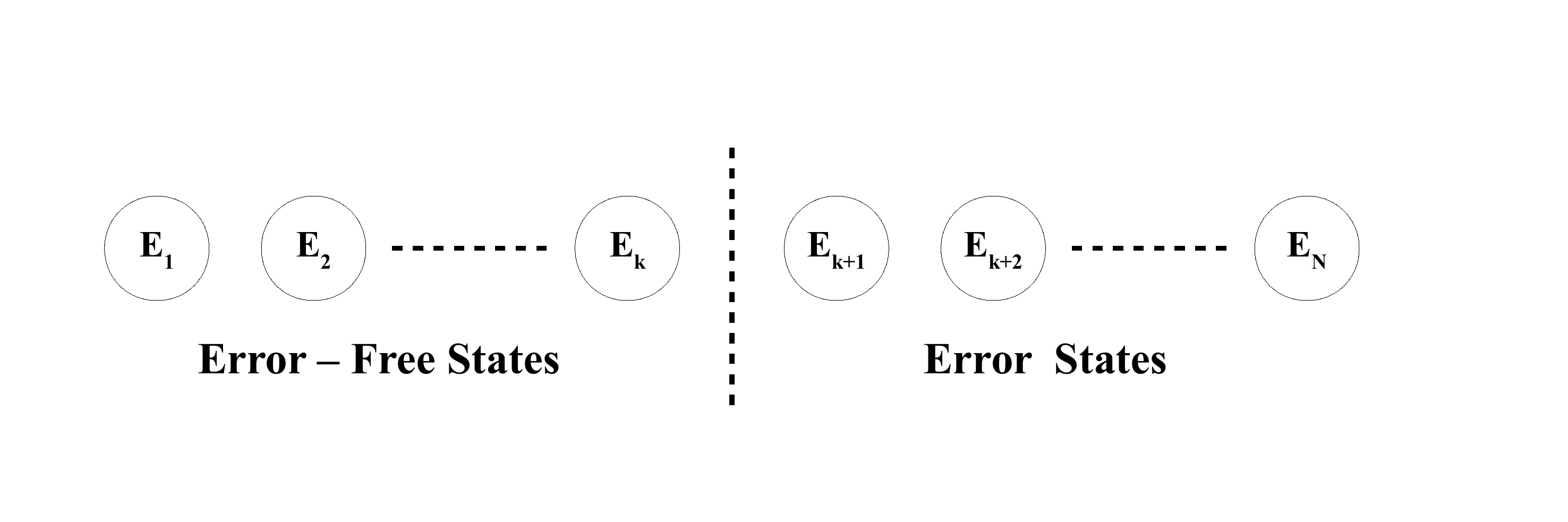}
\caption{The Fritchman Model with One Good State and $N$ Error States. In each error state the packet is lost with probability $1$ whereas in the good state it is lost with probability $\eps$.}
\end{figure}

We let the transition probability from the good state to the first error state $E_1$ to be $\al$ whereas the transition probability from each of the error states equals $\beta$. Let $\eps$ be the probability of a packet loss  in good state. We lose packets in any error state with probability $1$.  We consider two scenarios in Fig.~\ref{fig:Fritchman_T40} and~\ref{fig:Fritchman_T80} whose parameters are shown in Table~\ref{tab:Fritchman}.

\begin{table}[!htb]
	\centering
		\begin{tabular}{l|c|c}\\\hline
			& Fig.~\ref{fig:Fritchman_T40} & Fig.~\ref{fig:Fritchman_T80} \\\hline
			Channel States & 9 & 20 \\
			Delay $T$ & 40 & 80 \\
			$(\alpha,\beta)$ & $(10^{-5},0.5)$ & $(10^{-5},0.5)$ \\
			Channel Length & $10^8$ & $10^8$ \\
			Rate $R$ & $40/79$ & $80/159$ \\\hline
		\end{tabular}
	\caption{Fritchman Channel Parameters}
	\label{tab:Fritchman}
\end{table}

Fig.~\ref{fig:Fritchman_T40_Burst} and Fig.~\ref{fig:Fritchman_T80_Burst} illustrate the empirical histogram of burst-lengths in a sample erasure pattern generated over a channel of $10^8$ symbols. The actual distribution is given by a negative binomial distribution and is shown by the dotted envelope. 

In both Fig.~\ref{fig:Fritchman_T40} and Fig.~\ref{fig:Fritchman_T80}, the uncoded loss rate is shown by the upper-most plot while the black horizontal line is the performance of RLC.  
 Note that the  performance of RLC is essentially independent of $\eps$ in the interval of interest. As before the RLC codes clean up all the losses in the good state and fail against burst lengths longer than its column span. The performance of the SCo codes is shown by the red-plot in both figures. We note that it is better than the RLC code for $\eps = 10^{-3}$ but deteriorates quickly as we increase $\eps$.  There are two dominant error events for SCo codes. One is the simultaneous erasure of the symbols in the repetition code. The second is the occurrence of an isolated erasure in the good state  in the interval of length $T$ following a transition from the bad state. This particular event is significant for larger values of $T$.

The parameters of the embedded-RLC codes used in these figures are shown in Table~\ref{tab:ERLCProperties}. In Fig.~\ref{fig:Fritchman_T40} we observe that the shift of $\Delta=32$ has the smallest loss-rate over the interval of $\eps$ of interest. The longest burst-length observed in Fig.~\ref{fig:Fritchman_T40_Burst} is $B=30,$ which can be recovered by this shift and the relatively larger column distance of $d_T=9$ makes it more resilient than
the shift of $\Delta=36$. In Fig.~\ref{fig:Fritchman_T80} we observe that the shift of length $\Delta=60$ performs best for $\eps < 4\times 10^{-3}$ whereas
the shift of length $\Delta=52$ performs best for $\eps > 4\times 10^{-3}$. The relatively larger column-span of the former helps for small values of $\eps$ whereas the relatively larger column distance of the latter helps for lager values of $\eps$. For $\eps \approx 3\times 10^{-3}$,  the RLC achieve a loss-probability of $\approx 4\times 10^{-5},$ the SCo codes achieve $\approx 6 \times 10^{-5}$ whereas the proposed constructions achieve $\approx 6 \times 10^{-6}$. More generally over the entire range of $\eps,$ the best embedded-RLC codes achieve a loss rate which is a factor of $10$ or more smaller than the SCo code and between a factor of $3$ to over $10$ smaller than the RLC code.

\begin{figure*}
  \begin{minipage}[b]{0.5\linewidth}
    \centering
    \includegraphics[width=\linewidth]{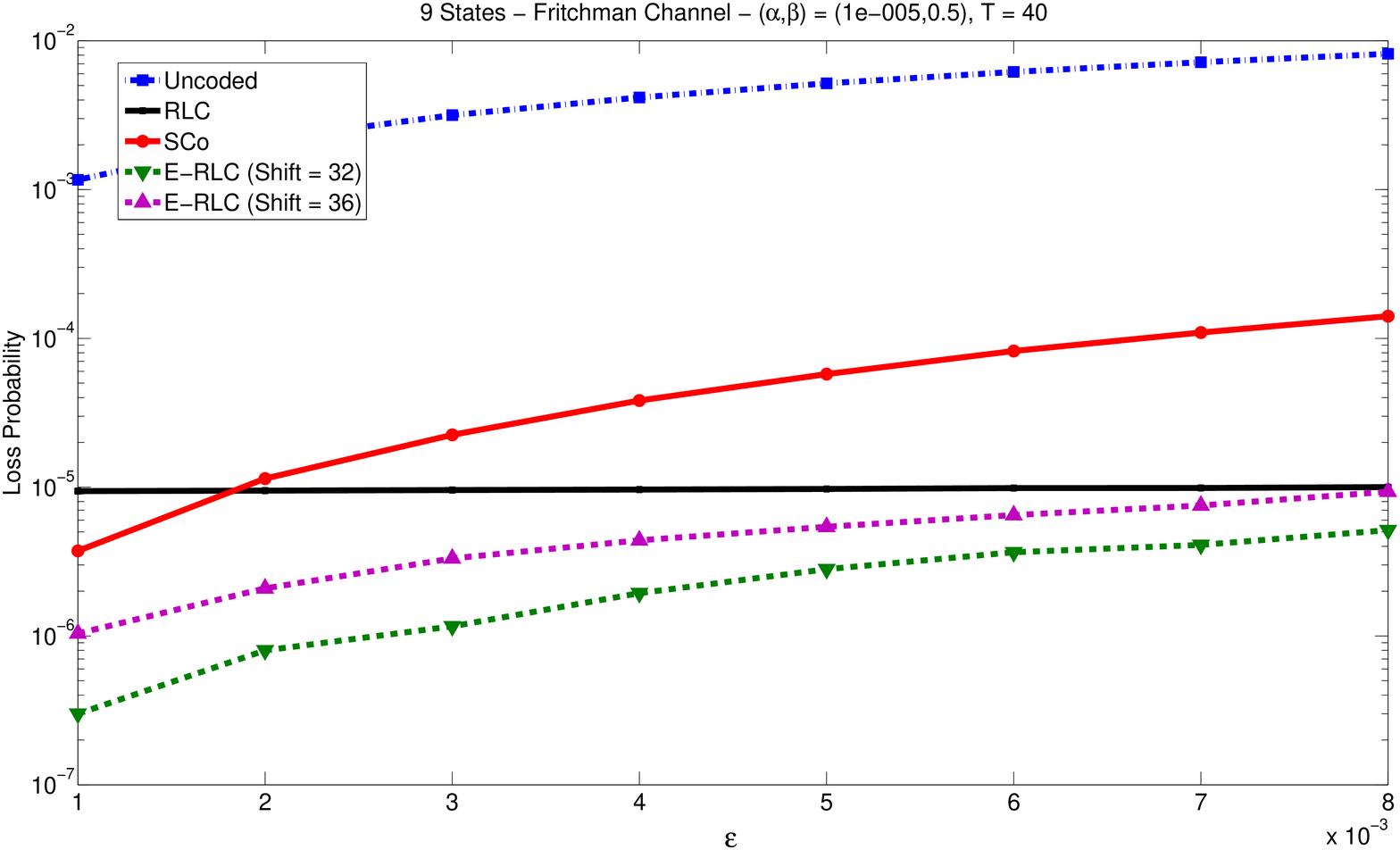}
    \caption{Simulation over a $N+1 = 9$-States Fritchman Channel with $(\al,\beta) = (10^{-5},0.5)$. All codes are evaluated using a decoding delay of $T=40$ symbols.}
    \label{fig:Fritchman_T40}
  \end{minipage}
  \hspace{0.5cm}
    \begin{minipage}[b]{0.5\linewidth}
    \centering
\includegraphics[width=\linewidth]{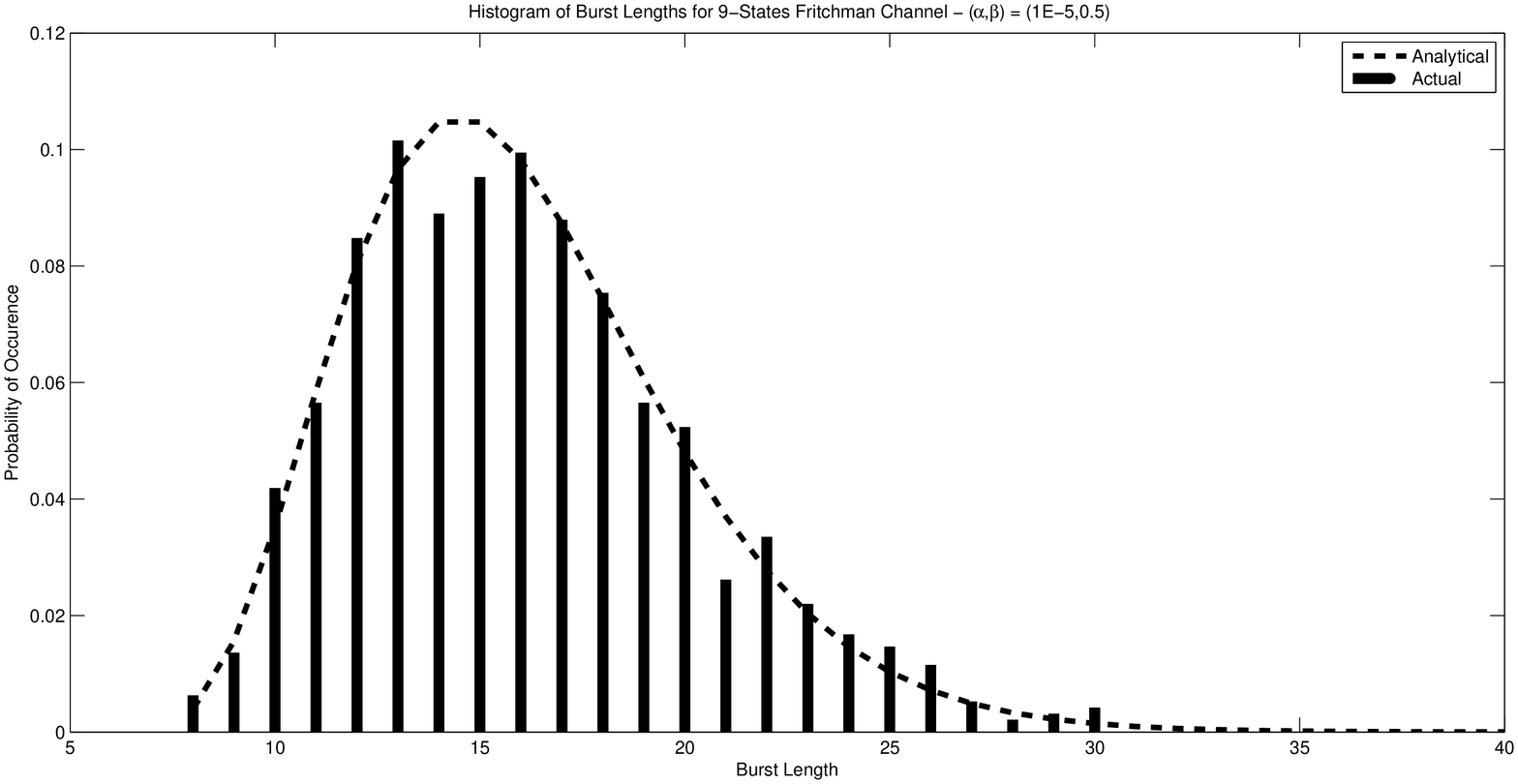}
\caption{Histogram of Bursts Lengths when $\beta =0.5$ in a $N+1 = 9$-States Fritchman Channel. The distribution follows a negative binomial distribution (shown dotted) of $N = 8$ failures and a success probability of 0.5.}
\label{fig:Fritchman_T40_Burst}
  \end{minipage}
\end{figure*}

\begin{figure*}
  \begin{minipage}[b]{0.5\linewidth}
    \centering
    \includegraphics[width=\linewidth]{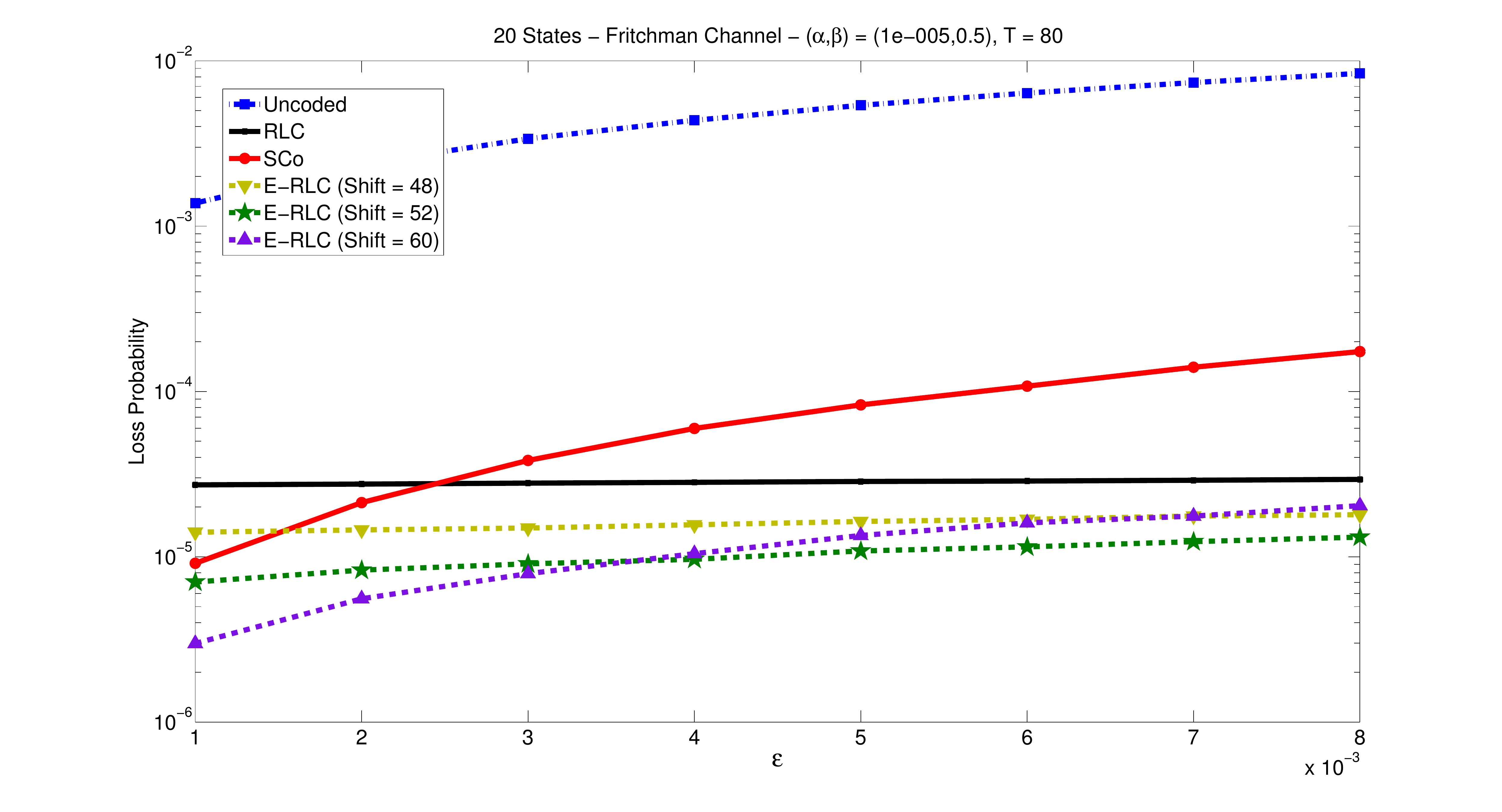}
    \caption{Simulation over a $N+1 = 20$-States Fritchman Channel with $(\al,\beta) = (10^{-5},0.45)$. All codes are evaluated using a decoding delay of $T=80$ symbols.}
    \label{fig:Fritchman_T80}
  \end{minipage}
  \hspace{0.5cm}
    \begin{minipage}[b]{0.5\linewidth}
    \centering
\includegraphics[width=\linewidth]{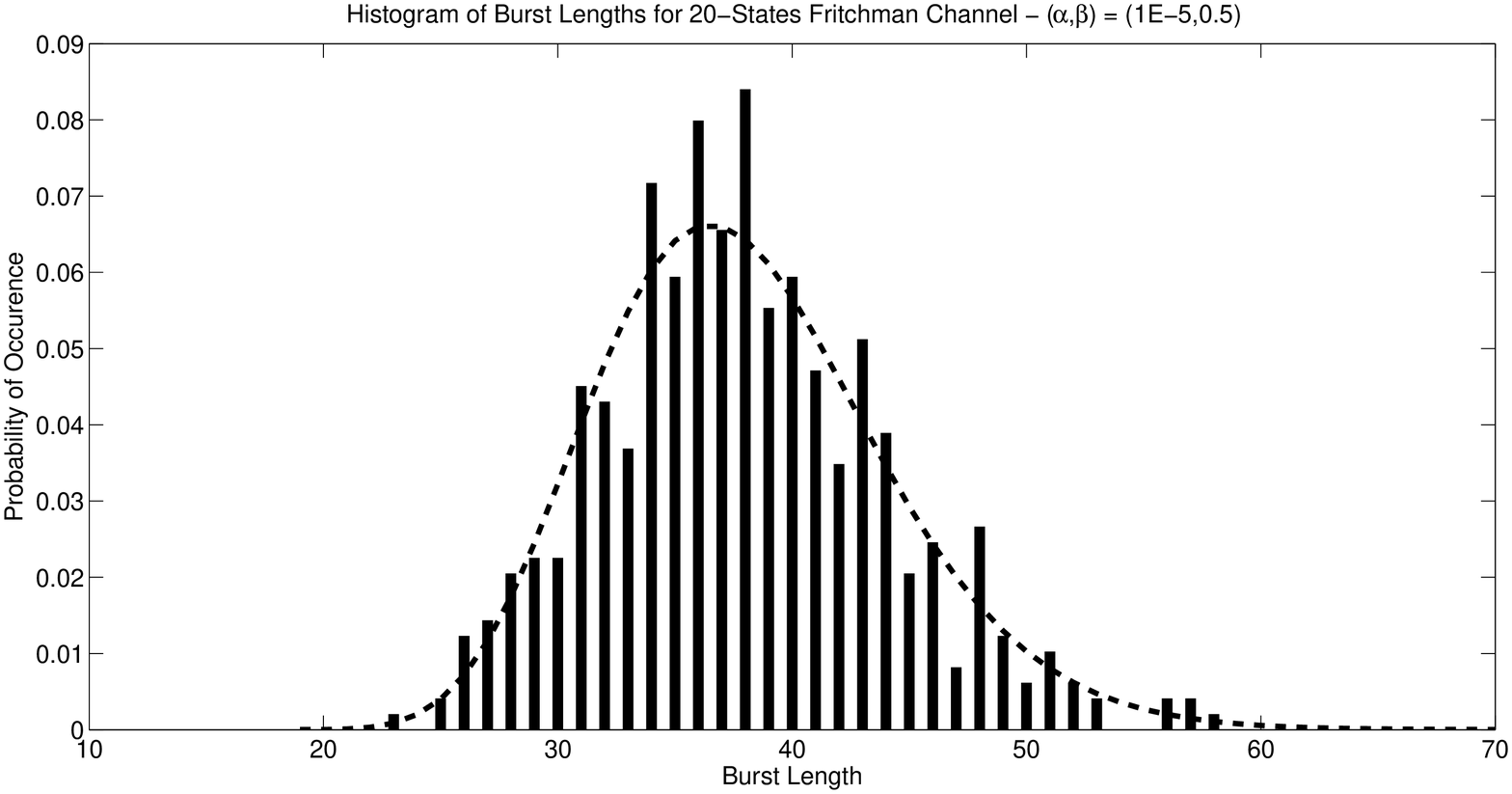}
\caption{Histogram of Bursts Lengths when $\beta =0.5$ in a $N+1 = 20$-States Fritchman Channel. The distribution follows a negative binomial distribution (shown dotted) of $N = 19$ failures and a success probability of 0.5.}
\label{fig:Fritchman_T80_Burst}
  \end{minipage}
\end{figure*}

\section{Conclusion}

We study the construction of low-delay codes for streaming data over channels that introduce both isolated and burst packet losses.
We show that good code constructions for such channels should simultaneously have large column span and column distance.
We establish, to our knowledge, the first outer bound on the achievable column-span and column-distance tradeoff for any convolutional code
of a given rate. This allows us to establish that some of the code constructions previously obtained from a computer search are indeed optimal.
We propose a new class of codes --- embedded-random linear codes --- that divide each source packet into two groups of symbols, perform unequal error protection
and combine the resulting parity checks with a suitable shift. We develop closed form expressions for the column distance and column span for these codes and
demonstrate how the code parameters can be tuned to obtain a flexible tradeoff between the column distance and column span.  Our proposed code constructions
achieve the outer bound for rate $R=1/2$ and also reduce to the known constructions such as the random linear codes and burst-erasure codes at the extreme points.
Numerical simulations on the Gilbert-Elliott channel and Fritchman channel indeed show significant performance gains over previously proposed constructions.

In terms of future work, it will be interesting to investigate optimal code constructions  for rates other than $R=0.5$. While our proposed construction in this paper splits each source-packet into 
two groups, it remains to be seen whether more groups are needed in general. It might also be interesting to see if the outer bound on column-distance and column-span tradeoff can be tightened for certain rate values.  
Extending these results to systems involving more than one communication link is also of great importance.
Finally experimental results over realistic packet loss traces will naturally provide a more realistic assessment of the performance gains from our delay-optimized code constructions.

\end{document}